\newcommand{\calB}{\mathcal{B}}
\newcommand{\calN}{\mathcal{N}}
\newcommand{\calP}{\mathcal{P}}
\renewcommand{\O}{\mathsf{O}}
\newcommand{\snr}{\mathsf{snr}}
\newcommand{\mmse}{\mathrm{mmse}}
\newcommand{\bbE}{\mathbb{E}}
\newcommand{\bbR}{\mathbb{R}}
\renewcommand{\d}{\,\mathrm{d}}
\newcommand{\Hmat}{\mathsf{H}}
\newcommand{\binent}{h_2}
\newcommand{\indic}{\mathds{1}}
\newcommand*\markov{\mathrel{-\mkern-2mu{\circ}\mkern-2mu-}}
\newtheorem{theorem}{Theorem}
\newtheorem{lemma}{Lemma}
\newtheorem{proposition}{Proposition}
\newtheorem{fact}{Fact}
\DeclarePairedDelimiter\floor{\lfloor}{\rfloor}
\DeclareMathOperator*{\argmax}{arg\,max}
\DeclareMathOperator*{\supp}{supp}
\title{On Entropy-Constrained Gaussian Channel Capacity via the Moment Problem} 
\author{%
	Adway Girish$^*$, Shlomo Shamai (Shitz)$^\dagger$, Emre Telatar$^*$ \\
	{\small    $^*$School of Computer and Communication Sciences, EPFL}\\
	{\small    $^\dagger$Faculty of Electrical and Computer Engineering, 
	Technion}\\
	{\small   
	\texttt{adway.girish@epfl.ch},\,\texttt{sshlomo@ee.technion.ac.il},\,\texttt{emre.telatar@epfl.ch}}
}
\begin{document}

	\maketitle
	
	\begin{abstract}
		We study the capacity of the power-constrained additive Gaussian 
		channel with an entropy constraint at the input. In particular, we 
		characterize this capacity in the low signal-to-noise ratio regime at 
		small entropy. This follows as a corollary of the following general 
		result on a moment matching problem: We show that for any continuous 
		random variable with finite moments, the largest number of initial 
		moments that can be matched by a discrete random variable of 
		sufficiently small but positive entropy is three.
	\end{abstract}

	\section{Introduction}
	
	Consider an additive Gaussian noise channel with an input-output 
	relationship given by $Y = \sqrt{\snr}X + Z$, where $\snr > 0$ is the 
	signal-to-noise ratio (SNR) of the channel and $Z \sim \calN(0,1)$ is a 
	standard Gaussian random variable, independent of $X$. We denote the mutual 
	information between $X$ and $Y$ by $I(X, \snr) \triangleq I(X;Y)$, since we 
	have $Y = \sqrt{\snr}X + Z$ throughout this paper. The capacity of this 
	channel with a power constraint on the input is given by
	\begin{equation*}
		C(\snr) = \sup_{\bbE[X^2] \leq 1} I(X, \snr) = \frac12 \log ( 1 + \snr),
	\end{equation*}
	where the supremum is over all distributions of $X$ over $\bbR$. The 
	supremum is achieved by taking $X \sim \calN(0,1)$. Operationally, the 
	capacity $C(\snr)$ is the largest communication rate at which an 
	arbitrarily small error probability can be ensured over an additive 
	Gaussian channel of SNR equal to $\snr$. 
	
	In modern cloud-based communication networks, it is a remote agent that 
	determines the channel input, and the communication from the agent to the 
	transmitter is rate-limited.  Consequently, the channel input is entropy 
	constrained.  As a mathematical model of such a scenario, we propose to 
	study the \emph{entropy-constrained capacity} of the Gaussian channel, 
	defined as 
	\begin{equation}
		C_H(h, \snr) = \sup_{\substack{\bbE[X^2] \leq 1,\\H(X) \leq h}} I(X, 
		\snr), \label{eqn: C_H}
	\end{equation}
	for $h > 0$. The operational interpretation of $C_H(h, \snr)$ is that it is 
	the largest communication rate at which an arbitrarily small error 
	probability can be ensured over the Gaussian channel of SNR $\snr$ using an 
	input of entropy at most $h$.
	Note that the entropy constraint forces $X$ to have a discrete 
	distribution. Also note that without the power constraint $\bbE[X^2] \leq 
	1$, $C_H(h, \snr)$ would be equal to $h$ for all $h > 0$. 
	
	Computing the Gaussian channel capacity can be viewed as an approximation 
	problem of random variables in the following sense: solving $\sup_X I(X, 
	\snr)$ is equivalent to solving $\inf_X C(\snr) - I(X, \snr) = \inf_X D(X+Z 
	\,\|\, X_G + Z)$, where $Z \sim \calN(0,1)$, $X_G$ is Gaussian with the 
	same mean and variance as $X$, and $Z$ is independent of $X$ and $X_G$. 
	Thus, the problem is to approximate the Gaussian random variable $X_G$ by 
	an $X$ that minimizes the ``non-Gaussianity'' $D(X+Z \,\|\, X_G + Z)$.
	Our main result characterizes $C_H(h, \snr)$ in the low-SNR regime, i.e., 
	$\snr \to 0$, for sufficiently small $h$. It turns out that in this regime, 
	the approximation problem described above essentially takes on the 
	following form, which can be posed independently of the above, in terms of 
	just the moments of the random variables involved.
	
	Consider the classical mathematical problem of approximating continuous 
	distributions with discrete 
	distributions \emph{by matching as many of their initial moments as 
	possible}, called the moment problem 
	\cite{akhiezer1965classical,Schmudgen2017book}. 
	In particular, we are interested in identifying discrete distributions that 
	match as many initial moments as possible \emph{while having a sufficiently 
	small 
		entropy}. Moments determine the 
	tail behaviour of distributions, but it is known that distributions that 
	are 
	visually very different can have nearly identical moment generating 
	functions, 
	and thus, moments \cite{mccullagh1994does}. Continuous distributions can be 
	thought of as having 
	``infinite'' entropy. Thus, a natural question to ask is whether we can 
	replicate similar tail behaviours as infinite entropy continuous 
	distributions 
	using ``low-entropy'' discrete distributions.
	We show that discrete distributions with sufficiently 
	small entropy (i.e., smaller than some positive constant depending on the 
	continuous distribution) can match no more than \emph{three} moments with 
	\emph{any} 
	continuous 
	distribution of finite moments. 
	
	Using this result, we show that for any $h < \binent(1/3) \approx 0.92$ 
	bits, as $\snr \to 0$, $C_H(h, \snr)$ is at most a constant factor times 
	$\snr^4$ away from $C(\snr)$.
	In addition to this low-SNR regime, we provide asymptotic expressions for 
	$C_H(h, \snr)$ in the regimes where $h \to 0$ and $h \to \infty$ in 
	Section~\ref{sec: results}.
	These follow as immediate corollaries of results on the asymptotic 
	tightness of $F_I$ curves \cite{calmon2018sdpi}. More details on the moment 
	problem and the solution of its low-entropy version are in 
	Section~\ref{sec: moments}. We conclude with some perspectives and open 
	problems in Section~\ref{sec: conc}.
	
	
	\subsection{Prior work}
	
	Several works have studied the effect of practical constraints on channel 
	capacity. For example,
	Wu and Verd\'u \cite{wu2010impact} considered a cardinality constraint on 
	the (discrete) input distribution and showed, among other things, that the 
	cardinality-constrained capacity approaches the unconstrained capacity 
	$C(\snr)$ exponentially fast as the cardinality goes to infinity.
	Surprisingly, a peak-amplitude constraint (instead of the power and entropy 
	constraints) also results in the capacity-achieving distribution being 
	discrete, as shown by Smith \cite{smith1971information}. This spurred 
	interest in characterizing the cardinality of the capacity-achieving 
	distribution and bounds on the capacity with an amplitude constraint 
	\cite{sharma2010transition,dytso2019capacity,thangaraj2017capacity}. 
	More recently, there have been extensions to various moment constraints and 
	channels 
	\cite{ma2021first,afaycal2001fading,lapidoth2009poisson,barletta2024binomial},
	 motivated by practical setups such as optimal communication.
	
	While our motivation for the entropy constraint comes from cloud-based 
	networks, it is worth noting that entropy constraints have become 
	increasingly popular in the machine learning community, particularly in 
	lossy source coding \cite{liu2022lossy,ebrahimi2024minimum}. 
	The moment problem has also been of interest in machine learning 
	\cite{li2015generative,nguyen2021distributional}, which is not unexpected, 
	given that it is an approximation problem. 
	Thus, identifying how many moments of continuous distributions can be 
	matched by low-entropy distributions is a timely and interesting problem in 
	its own right.
	The appearance of the moment problem in our context of computing the 
	Gaussian channel capacity is also not surprising, as it is known that 
	information measures associated with the Gaussian channel can be 
	approximated by polynomials of moments 
	\cite{guo2008estimation,alghamdi2024moments}.
	For an overview of the moment problem from a mathematical perspective, 
	refer to the recent textbook by Schm\"udgen \cite{Schmudgen2017book} (more 
	references are in Section~\ref{sec: moments}).

	\subsection{Notation}
	The sequence $(s_0, s_1, \dots)$ is represented by the shorthand 
	$(s_n)_{n=0}^{\infty}$. 
	Uppercase letters (e.g $X$, $Y$, \dots) denote random variables (random 
	variables). We use $\bbE[X]$ to denote the expectation of $X$.
	$D(X \,\|\, Y)$ denotes the KL divergence between the distributions of $X$ 
	and $Y$. The mutual information between $X$ and $Y$ is denoted by $I(X, 
	\snr)$, with $X$ and $Y$ being the input and output of an additive Gaussian 
	channel throughout the paper. The entropy of $X$ is denoted by $H(X)$, and 
	the binary entropy function by $\binent(x) = -x\log x - (1-x)\log (1-x)$. 
	All logarithms are taken with respect to any fixed base; when the base is 
	2, the unit is bits. We write $f(x) = \O(g(x))$ if there exists a finite 
	constant $M$ and $x_0 > 0$ such that $f(x) \leq M g(x)$ for all $|x| < 
	x_0$. We write $f(x) = \Theta(g(x))$ if $f(x) = \O(g(x))$ and $g(x) = 
	\O(f(x))$.
	The Hankel matrix of order $n$ is denoted by 
	$\Hmat_n(s_0,s_1,\dots,s_{2n})$ and is given by the $(n+1) \times (n+1)$ 
	matrix with $(i,j)$-th entry $s_{i+j}$ for $0 \leq i, j \leq n$
	(defined explicitly in \eqref{eqn: hankel}). $A \succ 0$ denotes that the 
	matrix $A$ is positive definite and $A \succeq 0$ that $A$ is positive 
	semidefinite. We use $\indic\{P\}$ to denote the indicator function of the 
	statement $P$, which is 1 when $P$ is true and 0 otherwise.

	\section{Asymptotic Characterizations of $C_H$} \label{sec: results}
	In this section, we describe our results characterizing $C_H(h, \snr)$ in 
	the following asymptotic regimes: 
	\begin{enumerate}
		\item[(i)] $h \to 0$ (Section~\ref{sec: entropy_asymp}, 
		Proposition~\ref{prop: F_I}),
		\item[(ii)] $h \to \infty$ (Section~\ref{sec: entropy_asymp}, 
		Proposition~\ref{prop: F_I}), and 
		\item[(iii)] $\snr \to 0$ and $h < \binent(1/3) \approx 0.92$ bits 
		(Section~\ref{sec: snr_asymp}, Theorem~\ref{thm: c_h}).
	\end{enumerate}
	Before doing so, we first show (Section~\ref{sec: existence}, 
	Proposition~\ref{prop: exist}) that the supremum in \eqref{eqn: C_H} is in 
	fact a maximum, i.e., for any $h, \snr > 0$, there is an input distribution 
	such that $X$ satisfies $\bbE[X^2] \leq 1$ and $H(X) \leq h$ and achieves 
	$I(X, \snr) = C_H(h, \snr)$. 
	
	\subsection{Existence of capacity-achieving distribution} \label{sec: 
	existence}
	We use arguments similar to Abou-Faycal et al.~\cite{afaycal2001fading} and 
	Wu and Verd\'u \cite{wu2010impact} to show the existence of a 
	capacity-achieving distribution. In particular, we show that the set of 
	feasible distributions is compact (with respect to the topology of weak 
	convergence; see, e.g., the book by Billingsley 
	\cite{billingsley2013convergence} for the results used in the proof) and 
	the mutual information restricted to this set is continuous, which implies 
	that the supremum is achieved \cite{luenberger1997optimization}.
	
	While we will not make use of this existence result in the remainder of 
	this paper, it is interesting to note that for any finite $h$, the capacity 
	is indeed achieved by some discrete distribution. This result also allows 
	us to assume without loss of generality that the discrete random variable 
	$X$ is zero mean and of unit variance, which may simplify further analysis.
	
	\begin{proposition} \label{prop: exist}
		For any $h, \snr > 0$, the entropy-constrained capacity defined in 
		\eqref{eqn: C_H} is given by
		\begin{equation*}
			C_H(h, \snr) = \max_{\substack{\bbE[X^2] \leq 1,\\H(X) \leq h}} 
			I(X, \snr),
		\end{equation*}
		and a maximizing choice of $X$ is zero mean, has unit variance, and has 
		entropy $h$.
	\end{proposition}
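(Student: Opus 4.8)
The plan is to prove existence of a maximizer by a weak-compactness and continuity argument in the spirit of \cite{afaycal2001fading,wu2010impact}, and then to obtain the three normalizations from elementary symmetry and perturbation arguments. Let $\mathcal{F} = \{P_X : \bbE[X^2] \le 1,\ H(X) \le h\}$ be the feasible set in \eqref{eqn: C_H}, where $H$ is the discrete entropy extended by $H(P) = +\infty$ when $P$ is not purely atomic, and equip the Borel probability measures on $\bbR$ with the topology of weak convergence. First I would show that $\mathcal{F}$ is weakly compact. Tightness --- hence relative compactness, by Prokhorov's theorem --- follows from $\sup_{P_X \in \mathcal{F}} P_X(|X| > M) \le M^{-2}$ (Chebyshev's inequality and the power constraint). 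For closedness, take $P_{X_n} \in \mathcal{F}$ with $P_{X_n} \Rightarrow P_{X^*}$: the functional $P \mapsto \int x^2 \d P$ is weakly lower semicontinuous (truncate the integrand to $x^2 \wedge M$ and send $M \to \infty$), so $\bbE[(X^*)^2] \le 1$; and discrete entropy is weakly lower semicontinuous, since it equals the supremum, over finite partitions $\mathcal{Q}$ of $\bbR$ into intervals with $P_{X^*}$-continuity-point endpoints, of the entropy of the distribution that $P$ induces on the cells of $\mathcal{Q}$, and each such functional is weakly continuous. Hence $H(X^*) \le \liminf_n H(X_n) \le h$, so $X^*$ is purely atomic and $P_{X^*} \in \mathcal{F}$.

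Next I would show that $P_X \mapsto I(X,\snr)$ is weakly continuous on $\mathcal{F}$. Since $I(X,\snr) = h(\sqrt{\snr}X + Z) - \tfrac12\log(2\pi e)$, it suffices to establish convergence of the output differential entropies along weakly convergent sequences in $\mathcal{F}$. If $P_{X_n} \Rightarrow P_{X^*}$ in $\mathcal{F}$, the densities $f_n$ of $Y_n = \sqrt{\snr}X_n + Z$ are bounded by $(2\pi)^{-1/2}$ and converge pointwise to the density of $\sqrt{\snr}X^* + Z$, so the nonnegative integrands $-f_n\log f_n$ converge pointwise; the power constraint yields a uniform tail bound of the form $f_n(y) \le C/y^2$ for large $|y|$, hence an $n$-independent integrable majorant, and dominated convergence gives $h(Y_n) \to h(\sqrt{\snr}X^* + Z)$. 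A continuous objective on the compact set $\mathcal{F}$ attains its maximum, so a maximizer $X^*$ exists and the supremum in \eqref{eqn: C_H} is a maximum.

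It remains to normalize the maximizer. Since $h > 0$, some non-constant feasible input has strictly positive mutual information, so any maximizer is non-constant and $\sigma^2 := \mathrm{Var}(X^*) \in (0,1]$. Translating $X^*$ to zero mean changes neither its entropy, nor its variance, nor its mutual information (the output merely shifts); rescaling the result by $1/\sigma$ then preserves the entropy, makes the variance $1$, and does not decrease the mutual information, since it amounts to replacing $\snr$ by $\snr/\sigma^2 \ge \snr$ and $I(\,\cdot\,,\snr)$ is nondecreasing in $\snr$ (by data processing). Finally, $C_H(h,\snr) < C(\snr)$ for every finite $h$: a feasible input achieving $C(\snr)$ would force the output to be Gaussian, hence the input to be $\calN(0,1)$, which is not purely atomic. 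Consequently, if a normalized maximizer $X^*$ had $H(X^*) < h$, one could pick a finitely supported $G$ with $\bbE[G^2] \le 1$ and $I(G,\snr) > I(X^*,\snr)$ --- possible because finitely supported power-constrained inputs approach $C(\snr) > C_H(h,\snr)$ --- and the mixture $(1-\lambda)P_{X^*} + \lambda P_G$ would, for small $\lambda$, remain feasible, since its entropy is at most $(1-\lambda)H(X^*) + \lambda H(G) + \binent(\lambda) \to H(X^*) < h$, while having strictly larger mutual information by concavity of $P_X \mapsto I(X,\snr)$; this contradicts optimality, so $H(X^*) = h$.

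The step I expect to be the main obstacle is the compactness-and-continuity part rather than the normalizations: specifically, the weak lower semicontinuity of discrete entropy (which is what keeps the weak limit purely atomic and inside $\mathcal{F}$) and the weak continuity of $h(\sqrt{\snr}X + Z)$ on $\mathcal{F}$ (which amounts to ruling out entropy leaking to infinity along a maximizing sequence), both of which rely essentially on the power constraint $\bbE[X^2] \le 1$.
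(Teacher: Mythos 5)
Your proposal is correct and follows essentially the same route as the paper: tightness via Chebyshev plus Prokhorov's theorem and lower semicontinuity of the second moment and of entropy give weak compactness of the feasible set, weak continuity of $I(\,\cdot\,,\snr)$ under the power constraint gives attainment, and recentering/rescaling gives the mean and variance normalizations. The only substantive differences are matters of detail rather than strategy --- you prove the weak continuity of the mutual information directly by dominated convergence where the paper cites a reference, and you establish $H(X^*)=h$ by an explicit mixture perturbation where the paper invokes a one-line reverse-convexity/boundary remark --- and both of your arguments are sound.
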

	\begin{proof}
		Let $\calP$ denote the Polish space of the set of all distributions on 
		$\bbR$ associated with the L\'evy metric, inducing the topology of weak 
		convergence.
		Let $\calP_h$ be the set of distributions on $\bbR$ satisfying the 
		constraints $\bbE[X^2] \leq 1$ and $H(X) \leq h$. Let 
		$(P_n)_{n=0}^{\infty}$ be a sequence of distributions in $\calP_h$ 
		converging weakly to $P \in \calP$. Let $X_n$ be distributed as $P_n$, 
		i.e., each $X_n$ satisfies $\bbE[X_n^2] \leq 1$ and $H(X_n) \leq h$ and 
		$X$ as $P$, i.e., $X_n$ converges to $X$ in distribution. Fatou's lemma 
		implies that $\bbE[X^2] \leq \liminf_{n \to \infty} \bbE[X_n^2] \leq 
		1$, and the lower-semicontinuity of entropy implies that $H(X) \leq 
		\liminf_{n \to \infty} H(X_n) \leq h$. 
		Hence, we have that $\calP_h$ is a closed subset of $\calP$. Further, 
		$\calP_h$ is tight, i.e., for every $\epsilon > 0$ there exists a 
		compact subset $K_{\epsilon}$ of $\bbR$ such that for all $P \in 
		\calP_h$, $X$ distributed as $P$ satisfies $\Pr\{X \notin 
		K_{\epsilon}\} \leq \epsilon$. Indeed, choosing $K_{\epsilon} = 
		\left[-1/{\sqrt{\epsilon}}, 1/{\sqrt{\epsilon}}\right]$, we have that 
		$\Pr\{X \notin K_{\epsilon}\} = \Pr\{|X| >1/{\sqrt{\epsilon}} \} \leq 
		\frac{\bbE[X^2]}{(1/\sqrt{\epsilon})^2} \leq \epsilon$, by Markov's 
		inequality. Thus, by Prokhorov's theorem, $\calP_h$ is weakly compact. 
		Since $\bbE[X^2] \leq 1$ for all distributions in $\calP_h$, we also 
		have that $I(X, \snr)$ is weakly continuous on $\calP_h$ 
		\cite{wu2012mmse}. Hence, the function achieves its maximum (which may 
		not be unique). 
		
		That the entropy of a maximizing $X$ is equal to $h$ follows by 
		observing that the optimization problem involves maximizing a concave 
		function over the complement of an open, convex set, hence any 
		maximizer has to lie on the boundary. We also have that the second 
		moment must be 1 for the same reason.
		To show that a maximizer must be of zero mean, suppose otherwise. 
		Observe that $\lambda(X - \bbE[X])$ with $\lambda = 1 / \sqrt{\bbE[X^2] 
		- \bbE[X]^2} > 1$ has a higher mutual information, mean zero and unit 
		variance.
	\end{proof}
	
	\subsection{Entropy asymptotics via $F_I$ curves} \label{sec: entropy_asymp}
	The results for the asymptotic regimes of $h \to 0$ and $h \to \infty$ 
	follow immediately from previously known results on $F_I$ curves, studied 
	by Calmon et al.~\cite{calmon2018sdpi}. These are defined as
	\begin{equation}
		F_I(h, \snr) = \sup_{\substack{\bbE[X^2] \leq 1,\\I(W;X) \leq h}} 
		I(W;Y), \label{eqn: F_I}
	\end{equation}
	with the supremum over all joint probability distributions over $W$ and 
	$X$, and $Y = \sqrt{\snr} X + Z$. The $F_I$ curves are a generalization of 
	the classical data processing inequality, which says that for $W \markov X 
	\markov Y$ forming a Markov chain, $I(W;X) \geq I(W;Y)$. This implies that 
	$F_I(h, \snr) \leq h$, but the $F_I$ curve gives us a finer 
	characterization of this decrease in mutual information. 
	
	There is a clear connection between $F_I$ and $C_H$, as setting $W = X$ in 
	\eqref{eqn: F_I} recovers \eqref{eqn: C_H}. This immediately implies that 
	$F_I$ is an upper bound to $C_H$, which is statement (i) of 
	Proposition~\ref{prop: F_I}.
	Note that we always have $C_H(h, \snr) \leq h$; statement (ii) shows that 
	this is tight as $h \to 0$, i.e., $\lim_{h \to 0} \frac{C_H(h, \snr)}{h} = 
	1$ and the difference $h - C_H(h, \snr)$ goes to $0$ as approximately 
	$h^{\frac \snr h}$. Similarly, we always have $C_H(h, \snr) \leq C(\snr) = 
	\frac12 \log(1 + \snr)$; statement (iii) shows that this is tight as $h \to 
	\infty$, i.e., $\lim_{h \to \infty} \frac{C_H(h, \snr)}{C(\snr)} = 1$ and 
	$C(\snr) - C_H(h, \snr)$ goes to $0$ doubly exponentially in $h$.
	\begin{proposition} \label{prop: F_I}
		The following statements are true for $C_H$ and $F_I$ (as defined in 
		\eqref{eqn: C_H} and \eqref{eqn: F_I} respectively), for all $\snr > 0$:
		\begin{enumerate}
			\item[(i)] $C_H(h, \snr) \leq F_I(h, \snr)$ for all $h > 0$.
			\item[(ii)] As $h \to 0$, $C_H(h, \snr) = h - e^{\snr\frac{\log 
			h}{h} + \O(\log\frac{\snr}{h})}.$
			\item[(iii)] As $h \to \infty$, $e^{-c_1(\snr) e^{4h}} \leq C(\snr) 
			- C_H(h, \snr) \leq c_2(\snr) e^{-c_3(\snr) e^{h}},$ for some 
			positive functions $c_1, c_2, c_3$.
		\end{enumerate}
	\end{proposition}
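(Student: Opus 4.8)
The plan is to obtain (i) directly from the definitions and to deduce (ii) and (iii) from the known asymptotics of the Gaussian $F_I$ curve. For (i): if $X$ is feasible for $C_H(h,\snr)$, then $H(X)\le h<\infty$ forces $X$ to be discrete, so $I(X;X)=H(X)\le h$; hence the joint law $(W,X)=(X,X)$ is feasible for the supremum defining $F_I(h,\snr)$ in \eqref{eqn: F_I} and attains $I(W;Y)=I(X,\snr)$. Taking the supremum over feasible $X$ gives $C_H(h,\snr)\le F_I(h,\snr)$. Combined with the data-processing bound $F_I(h,\snr)\le h$ recalled after \eqref{eqn: F_I} and with $I(X,\snr)\le C(\snr)$ under the power constraint, this records the two crude bounds $C_H(h,\snr)\le h$ and $C_H(h,\snr)\le C(\snr)$ that statements (ii) and (iii) refine.

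For the upper-bound halves of (ii) and (iii) I would combine (i) with the asymptotics of the Gaussian $F_I$ curve from Calmon et al.~\cite{calmon2018sdpi}: their small-$h$ expansion gives $C_H(h,\snr)\le F_I(h,\snr)=h-e^{\snr\frac{\log h}{h}+\O(\log\frac{\snr}{h})}$, which is the upper bound in (ii), and their near-capacity estimate, which lower-bounds $C(\snr)-F_I(h,\snr)$ by $e^{-c_1(\snr)e^{4h}}$, gives $C(\snr)-C_H(h,\snr)\ge C(\snr)-F_I(h,\snr)\ge e^{-c_1(\snr)e^{4h}}$, the left inequality in (iii). These directions need no new construction.

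The content lies in the matching lower bounds on $C_H$, for which I would exhibit explicit feasible discrete inputs. For (ii), take a flash-signaling input consisting of a point mass at $0$ and an atom at a large amplitude $v$ (symmetrized to zero mean if desired), with mass $p=\Theta(h/\log(1/h))$ on the large atom so that $H(X)\le h$ and $v=\Theta(p^{-1/2})$ so that $\bbE[X^2]\le 1$; estimating $I(X,\snr)=H(X)-H(X|Y)$ and controlling the posterior-entropy term by the Gaussian overlap at the single decision boundary gives $I(X,\snr)=h-e^{-\Theta(\snr/p)}$, matching the $F_I$ upper bound of (ii) up to the constants absorbed in the $\O(\log\frac{\snr}{h})$ exponent. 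For (iii), take a $\Theta(e^h)$-atom approximation of the capacity-achieving $\calN(0,1)$ input, arranged (in the spirit of the cardinality-constrained constructions of Wu and Verd\'u~\cite{wu2010impact}) to satisfy $\bbE[X^2]\le 1$ and $H(X)\le h$; the exponential decay of the cardinality-constrained capacity gap in the number of atoms then gives $C(\snr)-C_H(h,\snr)\le C(\snr)-I(X,\snr)\le c_2(\snr)e^{-c_3(\snr)e^h}$, the right inequality in (iii). Equivalently, one can check that the $F_I$-extremal couplings behind Calmon et al.'s bounds in these regimes are themselves of the form $W=X$ with $X$ discrete, hence feasible for $C_H$.

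The main obstacle is reconciling the two directions and tracking constants: because replacing a coupling $(W,X)$ by $(X,X)$ raises both the objective and the constraint value in \eqref{eqn: F_I}, the inequality $C_H\le F_I$ can be strict, so the asymptotic equalities are not automatic. In (ii) the work is to tune the flash parameters — and, if needed, to sharpen the single-boundary estimate of $H(X|Y)$ — so that the exponent of the error term agrees with $\snr\frac{\log h}{h}$ to within the claimed $\O(\log\frac{\snr}{h})$ slack; in (iii) the delicate point is to realize the Wu--Verd\'u-type near-capacity bound within the entropy budget $h$ rather than merely with a bounded number of atoms, which is what yields the $e^{-c_3(\snr)e^h}$ rate.
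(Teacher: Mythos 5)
Your proposal is correct and follows essentially the same route as the paper: part (i) is the $W=X$ substitution, and (ii)--(iii) are read off from Calmon et al.'s diagonal and horizontal bounds on $F_I$, with the achievability direction handled (as you note in your ``equivalently'' remark) by the fact that those bounds are attained by couplings of the form $W=X$ with $X$ discrete, hence feasible for $C_H$. The explicit flash-signaling and quadrature constructions you sketch are extra material the paper does not carry out --- and, as you yourself flag, the flash construction would still need careful constant-tracking to hit the exponent $\snr\frac{\log h}{h}$ within additive $\O(\log\frac{\snr}{h})$ slack --- but they are not needed once the cited $F_I$ extremizers are observed to be feasible for $C_H$.
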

	\begin{proof} (i) follows immediately by observing that setting $W = X$ in 
	the definition of $F_I$ \eqref{eqn: F_I} yields exactly $C_H$. (ii) and 
	(iii) follow from the diagonal \cite[Remark 2]{calmon2018sdpi} and 
	horizontal \cite[Remark 5]{calmon2018sdpi} bounds on $F_I$ respectively, 
	which are attained by choosing $W = X$ in \eqref{eqn: F_I}.
	\end{proof}
	
	\subsection{Low-SNR asymptotics} \label{sec: snr_asymp}
	Our main result characterizing $C_H$ is in the low-SNR regime, as stated in 
	Theorem~\ref{thm: c_h} below. As $\snr \to 0$, we show that for any $h < 
	\binent(1/3) \approx 0.92$ bits, $C_H(h, \snr)$ is at most a constant times 
	$\snr^4$ worse than $C(\snr)$.
	\begin{theorem} \label{thm: c_h}
		As $\snr \to 0$, for any $h < \binent(1/3)$, we have $C(\snr) - C_H(h, 
		\snr) = \O(\snr^{4}).$
	\end{theorem}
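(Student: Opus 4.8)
\emph{Reduction to bounding non-Gaussianity.}  The plan is to bound the gap $C(\snr)-C_H(h,\snr)$ from above by the ``non-Gaussianity'' of the channel output for a conveniently chosen discrete input, and then make that quantity $\O(\snr^4)$ by matching moments. If $X$ has zero mean and unit variance, then $\sqrt\snr X+Z$ and $\calN(0,1+\snr)$ (the law of $\sqrt\snr X_G+Z$ with $X_G\sim\calN(0,1)$) have the same mean and variance, so by the approximation viewpoint recalled in the introduction,
\[
  C(\snr)-I(X,\snr) = D(\sqrt\snr X + Z \,\|\, \calN(0,1+\snr)).
\]
Hence for any feasible zero-mean, unit-variance $X$ (i.e.\ $\bbE[X^2]\le 1$, $H(X)\le h$, $\bbE[X]=0$, $\bbE[X^2]=1$) we get $C(\snr)-C_H(h,\snr)\le D(\sqrt\snr X+Z\,\|\,\calN(0,1+\snr))$, and it suffices to exhibit one such $X$ for which the right-hand side is $\O(\snr^4)$. (Proposition~\ref{prop: exist} says the optimal input has this form, but we do not need that — any convenient choice serves for the upper bound.)

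\emph{Choosing the input via the moment problem.}  I would take $X$ discrete with $\bbE[X]=0$, $\bbE[X^2]=1$, $\bbE[X^3]=0$ — i.e.\ matching the first three moments of the standard Gaussian — and $\bbE[X^4]<\infty$. A symmetric three-point law on $\{-a,0,a\}$ with $\Pr\{X=\pm a\}=\epsilon/2$ and $a=1/\sqrt\epsilon$ does this and has entropy $\binent(\epsilon)+\epsilon\log 2\to 0$ as $\epsilon\to 0$; this is a special case of the achievability side of the moment-problem analysis of Section~\ref{sec: moments}, which produces, for every $h$ in the stated range, a feasible discrete input of entropy at most $h$ matching exactly three moments. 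The hypothesis $h<\binent(1/3)$ enters here: it is the entropy threshold below which, by the main moment result, the fourth moment $\bbE[X^4]=3$ cannot additionally be matched, so three moments is the best one can do (and in particular the resulting $\snr^4$ rate is sharp in this range, not merely an upper bound).

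\emph{Estimating the divergence.}  With such an $X$ fixed, expand the output density by a Gram--Charlier/Hermite expansion: with $\phi$ the standard normal density and $He_n$ the probabilists' Hermite polynomials, $\phi(y-t)=\phi(y)\sum_{n\ge 0}\frac{t^n}{n!}He_n(y)$ gives
\[
  p_\snr(y)=\phi(y)\sum_{n\ge 0}\frac{\snr^{n/2}}{n!}He_n(y)\,\bbE[X^n],\qquad
  q_\snr(y)=\phi(y)\sum_{n\ge 0}\frac{\snr^{n/2}}{n!}He_n(y)\,\bbE[X_G^n],
\]
for the densities of $\sqrt\snr X+Z$ and $\calN(0,1+\snr)$. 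Since $\bbE[X^n]=\bbE[X_G^n]$ for $n\le 3$, the low-order terms cancel and $(p_\snr-q_\snr)(y)=\phi(y)\big(\tfrac{\snr^2}{24}(\bbE[X^4]-3)He_4(y)+\O(\snr^{5/2})\big)$, of order $\snr^2$. One then bounds $D(p_\snr\,\|\,q_\snr)$ by a $\chi^2$-type quantity, $D(p_\snr\,\|\,q_\snr)=\O\big(\int(p_\snr-q_\snr)^2/q_\snr\big)$, and evaluates the latter via the orthogonality $\int\phi(y)\,He_m(y)He_n(y)\,\d y=n!\,\delta_{mn}$, obtaining $\O\big((\bbE[X^4]-3)^2\,\snr^4\big)=\O(\snr^4)$ with the implied constant depending only on $h$ through the fixed input (in particular through $\bbE[X^4]$).

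\emph{Main obstacle.}  The delicate step is making the divergence estimate rigorous: the Gram--Charlier expansion is asymptotic rather than convergent, and $p_\snr/q_\snr$ is not uniformly close to $1$ (the chosen input has an atom at distance $\asymp 1/\sqrt\epsilon$ from the origin), so the informal ``$D\approx\tfrac12\chi^2$'' must be justified — e.g.\ by splitting $\bbR$ into a bulk region, where $|p_\snr/q_\snr-1|$ is controlled and inequalities such as $\log(1+t)\le t$ and $(1+t)\log(1+t)\le t+t^2$ apply, and tails, where both densities are sub-Gaussian and the contribution is bounded directly. A cleaner alternative is to route through the I--MMSE identity $C(\snr)-I(X,\snr)=\tfrac12\int_0^\snr\big(\tfrac{1}{1+\gamma}-\mmse(X,\gamma)\big)\,\d\gamma$ and the low-SNR expansion of $\mmse(X,\gamma)$ in the moments of $X$: matching three moments makes the integrand $\O(\gamma^3)$, which integrates to $\O(\snr^4)$. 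Either way, the remaining computations are routine.
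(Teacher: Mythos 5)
Your proposal is correct and matches the paper's proof in essence: exhibit a low-entropy discrete input matching the first three Gaussian moments (the achievability half of Theorem~\ref{thm: main}), then conclude the $\O(\snr^{4})$ gap from the low-SNR moment expansion of mutual information --- your ``cleaner alternative'' via I--MMSE is exactly the route the paper takes. Your primary Gram--Charlier/$\chi^2$ estimate is just a more hands-on variant of the same idea, and your remark that $h < \binent(1/3)$ is what makes the $\snr^4$ rate sharp (rather than what is needed for the upper bound itself) is accurate.
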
 
	\begin{proof}
		The main ingredient of the proof is Theorem~\ref{thm: main} 
		(Section~\ref{sec: moments}), which implies that for any $h < 
		\binent(1/3)$, among all discrete distributions with entropy at most 
		$h$, the largest $k$ such that the $k$-th moment of the discrete 
		distribution is equal to the $k$-th moment of the Gaussian 
		distribution, is three.
		
		Let us consider distributions that have a sufficiently large number of 
		finite moments (say $2n$, which is at least 10). By the I-MMSE 
		relationship \cite{guo2004immse}, note that $I(X, \snr)$ is $n$-times 
		differentiable in $\snr$ if $\bbE[X^{2n}]<\infty$ 
		\cite{guo2008estimation}. Hence, we can write a Taylor expansion of 
		$I(X, \snr)$ about $\snr = 0$ up to the $n$-th order ($n \geq 5$), with 
		the coefficients of $\snr^k$ being a polynomial of the first $k$ 
		moments of $X$ \cite{wu2010impact}. 
		Consider the difference $C(\snr) - I(X, \snr)$, which is equal to 
		$I(X_G, \snr) - I(X, \snr)$, as the capacity of the Gaussian channel 
		(with only a power constraint) is achieved by $X_G \sim \calN(0,1)$. 
		By Theorem~\ref{thm: main}, the first non-zero term in the Taylor 
		expansion of $C(\snr) - I(X, \snr)$ is $\Theta(\snr^4)$ for any $X$ 
		with finite $\bbE[X^n]$ up to $n \geq 10$. 
		Hence, the minimum of $C(\snr) - I(X, \snr)$ over all distributions 
		satisfying $\bbE[X^2] \leq 1$ and $H(X) \leq h$, which is exactly 
		$C(\snr) - C_H(h, \snr)$ is $\O(\snr^4)$. 
	\end{proof}
	
	Thus, the key to Theorem~\ref{thm: c_h} is the low-entropy moment problem, 
	which we state and solve in the next section.
	
	\section{The Moment Problem} \label{sec: moments}
	We now describe the classical moment problem in measure theory. Though the 
	problem is usually stated in terms of general measures, we restrict 
	ourselves to probability measures and continue to use the language of 
	random variables for consistency, since they are entirely equivalent: $\mu$ 
	is a probability measure on $\bbR$ (which we equip with the Borel 
	$\sigma$-algebra $\calB(\bbR)$ throughout) if and only if there is a random 
	variable $X$ such that $\Pr\{X \in A\} = \mu(A)$ for all $A \in 
	\calB(\bbR)$.

	One version of the classical \emph{moment problem} \cite{Schmudgen2017book} 
	is the following: 
	\begin{quote}
		Given a sequence $(s_n)_{n=0}^{\infty}$, does there exist a random 
		variable $X$ on $\bbR$ such that $\bbE[X^n] = s_n$ for all $n \geq 0$? 
	\end{quote}
	Hamburger \cite{hamburger1920erweiterung} characterized the sequences 
	$(s_n)_{n=0}^{\infty}$ with a positive answer to this question. The result 
	is stated in terms of the Hankel matrix of order $n$ associated with 
	$(s_n)_{n=0}^{\infty}$, given by
	\begin{equation}
		\Hmat_n(s_0, s_1,\dots, s_{2n}) = \begin{pmatrix}
			s_0 & s_1 & \dots & s_n \\
			s_1 & s_2 & \dots & s_{n+1} \\
			\vdots & \vdots & \ddots & \vdots \\
			s_n & s_{n+1} & \dots & s_{2n} \label{eqn: hankel}
		\end{pmatrix}.
	\end{equation}
	\begin{fact}[\cite{Schmudgen2017book,hamburger1920erweiterung}] Given an 
	infinite sequence $(s_n)_{n=0}^{\infty}$, there exists a random variable 
	$X$ on $\bbR$ such that $\bbE[X^n] = s_n$ for $n \geq 0$, if and only if 
	$s_0 = 1$ and the Hankel matrices $\Hmat_n(s_0,\dots, s_{2n})$ are positive 
	semidefinite for all $n \geq 1$.
		Further, $X$ has an infinite support if and only if $\Hmat_n(s_0,\dots, 
		s_{2n})$ is positive definite for all $n$.
		\label{fact: hamburger}
	\end{fact}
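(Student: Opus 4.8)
Necessity is immediate. If $X$ has moments $(s_n)_{n=0}^{\infty}$, then $s_0 = \bbE[X^0] = 1$; and for any real $c_0,\dots,c_n$, setting $p(x) = \sum_{i=0}^{n} c_i x^i$ gives $0 \le \bbE[p(X)^2] = \sum_{i,j=0}^{n} c_i c_j\, s_{i+j} = \mathbf{c}^\top \Hmat_n(s_0,\dots,s_{2n})\,\mathbf{c}$, so $\Hmat_n \succeq 0$ for every $n$, where $\mathbf{c} = (c_0,\dots,c_n)^\top$.

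For sufficiency I would run the operator-theoretic (spectral-theorem) construction. Define the linear functional $L$ on the polynomial ring $\bbR[x]$ by $L(x^n) = s_n$, extended linearly; the hypothesis that all $\Hmat_n \succeq 0$ is precisely the statement that $L$ is positive, i.e.\ $L(p^2) \ge 0$ for every $p \in \bbR[x]$. Then $\langle p, q\rangle := L(pq)$ is a positive semidefinite symmetric bilinear form whose null space $\calN = \{p : L(p^2) = 0\}$ is a linear subspace (Cauchy--Schwarz), so $\bbR[x]/\calN$ carries an inner product; let $\mathcal{H}$ be its completion, a Hilbert space with cyclic vector $e := [1]$ satisfying $\|e\|^2 = L(1) = s_0 = 1$. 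Multiplication by $x$ descends to a densely defined symmetric operator $M$ on $\mathcal{H}$ (it preserves $\calN$ by Cauchy--Schwarz, and $\langle Mp, q\rangle = L(xpq) = \langle p, Mq\rangle$). Since $M$ commutes with the natural conjugation on the complexification, it is a real symmetric operator and hence extends to a self-adjoint operator $\tilde M$ on $\mathcal{H}$. The spectral theorem gives $\tilde M = \int_{\bbR} \lambda\,\mathrm{d}E(\lambda)$ for a projection-valued measure $E$ on $\calB(\bbR)$, and I would take $\mu(A) := \langle E(A)e, e\rangle$, a Borel probability measure since $\mu(\bbR) = \|e\|^2 = 1$. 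Because $e$ is represented by a polynomial it lies in the domain of every power of $M$, hence of every power of $\tilde M$, so that $\int_{\bbR} \lambda^n\,\mathrm{d}\mu(\lambda) = \langle \tilde M^n e, e\rangle = \langle M^n e, e\rangle = \langle [x^n], [1]\rangle = L(x^n) = s_n$. A r.v.\ $X$ with law $\mu$ then solves the problem.

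For the support statement, suppose some $\Hmat_n(s_0,\dots,s_{2n})$ is singular: then $\mathbf{c}^\top \Hmat_n \mathbf{c} = 0$ for some $\mathbf{c} \ne 0$, i.e.\ $\int p^2\,\mathrm{d}\mu = L(p^2) = 0$ for a nonzero polynomial $p$ of degree $\le n$, which forces $\mu$ to be concentrated on the zero set of $p$, a set of at most $n$ points; so $X$ has finite support. Conversely, if $X$ is supported on $\{a_1,\dots,a_k\}$, then $p(x) = \prod_{i=1}^{k}(x - a_i)$ satisfies $L(p^2) = \int p^2\,\mathrm{d}\mu = 0$, exhibiting a singular $\Hmat_k$. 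Taking contrapositives yields the stated equivalence: $X$ has infinite support if and only if $\Hmat_n \succ 0$ for all $n$.

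The hard part is sufficiency, and within it the self-adjoint extension together with the moment identification: $M$ need not be essentially self-adjoint (this is exactly the indeterminate moment problem, with non-unique representing measure), so one must work with a genuine extension and then carefully check that the cyclic vector $e$ remains in the domains of all powers of $\tilde M$, without which $\langle \tilde M^n e, e\rangle$ is ill-defined and the identification of moments breaks. A more elementary route avoids operator theory: for each $N$ solve the truncated problem of matching $s_0,\dots,s_{2N}$ --- via Gauss quadrature and orthogonal polynomials when $\Hmat_N \succ 0$, and by a direct finitely supported construction read off from $\ker \Hmat_N$ in the degenerate case --- obtaining measures $\mu_N$; these are tight because $\int x^2\,\mathrm{d}\mu_N = s_2$ for all $N \ge 1$, so by Helly/Prokhorov a subsequence converges weakly to some $\mu$; finally, uniform integrability --- for each fixed $k$ one has $\sup_N \int x^{2k}\,\mathrm{d}\mu_N < \infty$ --- lets one pass moments to the limit and conclude $\int x^n\,\mathrm{d}\mu = s_n$. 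In that route the delicate points are the degenerate truncated case and the moment-convergence argument.
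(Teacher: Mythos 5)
The paper does not prove this statement: it is quoted as a classical result (Hamburger's theorem) with citations to Schm\"udgen and Hamburger, so there is no in-paper argument to compare against. Your proposal is a correct reconstruction of the standard proof from that literature. The necessity direction and the support characterization are complete as written (in particular, your singular-Hankel argument correctly shows that a kernel vector of $\Hmat_n$ forces every representing measure onto the zero set of a degree-$\le n$ polynomial, which is what makes the ``infinite support iff all $\Hmat_n\succ 0$'' equivalence work even when the representing measure is not unique). For sufficiency, both routes you sketch are the classical ones, and you flag exactly the right delicate points: the multiplication operator need not be essentially self-adjoint, so one must invoke von Neumann's conjugation criterion to get a self-adjoint extension $\tilde M$ in the same space, and then use that $e=[1]$ stays in $D(\tilde M^n)$ with $\tilde M^n e = M^n e = [x^n]$ to identify the moments; in the elementary route, the degenerate truncated case and the passage from weak convergence to convergence of moments (via the uniform bound $\int x^{2k}\,\mathrm{d}\mu_N = s_{2k}$ for $N\ge k$) are the steps that need care. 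Nothing in the outline is wrong; it is a proof sketch at the appropriate level for a cited textbook theorem.
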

	
	Similar solutions can be obtained when the random variables are restricted 
	to be on various closed subsets of $\bbR$, such as the non-negative real 
	axis $[0, \infty)$ \cite{stieltjes1894recherches} and compact intervals 
	$[a,b]$ \cite{hausdorff1921summationsmethoden}. An interesting variant of 
	the problem is to require only a finite number of initial moments to be 
	matched, leading to the \emph{truncated moment problem}. Surprisingly, if 
	this can be done by any random variable, it can be done by a discrete 
	random variable with a finite number of atoms. The truncated sequences that 
	allow for a solution were characterized by Curto and Fialkow 
	\cite{CurtoFialkow1991}.

	\begin{fact}[\cite{Schmudgen2017book,CurtoFialkow1991}] Given a truncated 
	sequence $(s_n)_{n=0}^{k}$, there exists a random variable $X$ on $\bbR$ 
	such that $\bbE[X^n] = s_n$ for $n = 0,1,\dots,k$, if and only if $s_0 = 1$ 
	and
		\begin{enumerate}
			\item[(i)] (for odd $k = 2\ell+1$) there exists $\tilde 
			s_{2\ell+2}$ such that $\Hmat_{\ell+1}(s_0, s_1, \dots, s_k, \tilde 
			s_{2\ell+2}) \succeq 0$;
			\item[(ii)] (for even $k = 2\ell$) there exist $\tilde 
			s_{2\ell+1}$, $\tilde s_{2\ell+2}$ such that $\Hmat_{\ell+1}(s_0, 
			s_1, \dots, s_k, \tilde s_{2\ell+1}, \tilde s_{2\ell+2}) \succeq 0$.
		\end{enumerate} 
		Further, $X$ is discrete and has at most $\floor{k/2} + 1$ atoms.    
		\label{fact: hankel}
	\end{fact}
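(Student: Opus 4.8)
The plan is to translate the problem from measures to the \emph{Riesz functional} and to reduce the whole statement to the one-variable fact that a nonnegative polynomial on $\bbR$ is a sum of squares. Given the truncated sequence $(s_n)_{n=0}^{k}$ with $s_0 = 1$, let $L$ be the linear functional on polynomials of degree $\le k$ with $L(x^i) = s_i$, and, after adjoining the auxiliary entries $\tilde s_{k+1}$ (and $\tilde s_{k+2}$, in the even case) appearing in (i)/(ii), extend $L$ correspondingly. Writing $m = \floor{k/2}$, the key observation is that $\Hmat_m(s_0,\dots,s_{2m})$ is precisely the Gram matrix of the bilinear form $(p,q)\mapsto L(pq)$ in the monomial basis $1,x,\dots,x^m$; hence ``$\Hmat_m \succeq 0$'' means ``$L(p^2)\ge 0$ for every polynomial $p$ with $\deg p \le m$.'' Conditions (i)/(ii) say exactly that this positivity persists for $L$ extended one degree further, namely $L(q^2)\ge 0$ for all $\deg q\le m+1$ — and, importantly, that it does so for an actual extension of the sequence, which is what distinguishes the exact problem from its (not necessarily closed) relaxation ``$\Hmat_m\succeq 0$.''

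Necessity is the easy direction. If $X$ represents $(s_n)_{n=0}^{k}$, then $L(p^2) = \bbE[p(X)^2]\ge 0$ for every $p$ with $\deg p\le m$, which is the Hankel positivity of order $m$. To get the one-step-larger matrix required by (i)/(ii), it is enough to exhibit a r.v.\ $X'$ matching the same $k$ moments and possessing finite moments of \emph{all} orders: then $\tilde s_{k+1}=\bbE[(X')^{k+1}]$ (and $\tilde s_{k+2}=\bbE[(X')^{k+2}]$) make the enlarged matrix a genuine Hankel matrix of a distribution, hence PSD. The existence of such an $X'$ is exactly the atom bound we prove below; alternatively it is immediate from Carath\'eodory's theorem, since $(s_0,\dots,s_k)$ is a convex combination of points on the moment curve $\{(1,t,\dots,t^k):t\in\bbR\}$.

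For sufficiency and the atom bound, assume the condition and set $n=m=\floor{k/2}$, so that (after adjoining the auxiliary entries) $\Hmat_{n+1}\succeq 0$, equivalently $L(q^2)\ge 0$ for $\deg q\le n+1$, with $L$ reproducing $s_0,\dots,s_k$. If $\Hmat_n$ is nonsingular, form the monic orthogonal polynomial $\pi$ of degree $n+1$ by solving the linear system $L(\pi\,x^j)=0$ for $0\le j\le n$ (solvable because $\Hmat_n\succ 0$); standard arguments using the positivity of $L$ show that $\pi$ has $n+1$ distinct real roots $x_1,\dots,x_{n+1}$, and the Gaussian-quadrature weights $w_j=L(\ell_j^2)>0$, with $\ell_j$ the degree-$n$ Lagrange interpolators at the nodes, give an atomic measure $\sum_j w_j\dirac_{x_j}$ that reproduces $L$ on all polynomials of degree $\le 2n+1\ge k$ — so at most $n+1=\floor{k/2}+1$ atoms. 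If $\Hmat_n$ is singular, take a kernel element $p$ of minimal degree $d\le n+1$; one shows $p$ has $d$ simple real roots, that the hypotheses force the sequence to be ``recursively generated'' by the linear relation encoded in $p$ (so the columns of the Hankel matrix are mutually consistent), and that putting atoms at those $d$ roots with weights from the resulting nonsingular Vandermonde system yields a representing measure with $d\le\floor{k/2}+1$ atoms.

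The nondegenerate case above is just classical Gaussian quadrature and is routine; the real obstacle is the singular case, which is the heart of Curto and Fialkow's result. There one must prove that the minimal kernel polynomial has only simple real roots, that the sequence is then forced to be recursively generated (so no contradiction arises among the Hankel columns), and that the resulting quadrature weights are nonnegative; equivalently, one must establish the flat-extension theorem (a PSD Hankel matrix admitting a rank-preserving one-step extension has a unique atomic representing measure with exactly $\mathrm{rank}$-many atoms), and observe that the auxiliary entries $\tilde s_{k+1},\tilde s_{k+2}$ in (i)/(ii) are precisely what is needed to force such a flat extension to exist one level up — which is also why the odd case carries a genuine hypothesis on $s_k$ beyond $\Hmat_{\floor{k/2}}\succeq 0$. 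I would either invoke this theorem as the cited black box or, for a self-contained treatment, prove it by induction on the order using the kernel structure of the Hankel matrix.
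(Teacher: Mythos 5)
The paper offers no proof of this Fact---it is imported directly from Curto--Fialkow and Schm\"udgen---so the only meaningful comparison is between your sketch and the standard proof in those references, whose architecture you have correctly reproduced: the Riesz-functional/Gram-matrix dictionary, Gaussian quadrature in the nonsingular case (orthogonal polynomial of degree $n+1$, simple real roots, positive Christoffel weights $L(\ell_j^2)$, exactness up to degree $2n+1$, hence the $\floor{k/2}+1$ atom bound), and the kernel/recursiveness analysis in the singular case. Two remarks on your necessity direction: the first route you offer (``the atom bound we prove below'') is circular, since that bound is derived from the very conditions you are establishing, and the fallback is really Richter's theorem rather than something ``immediate from Carath\'eodory'' (the moment curve is unbounded, so reducing an arbitrary representing measure to a finitely atomic one is itself a theorem). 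A cleaner path avoids both: $\Hmat_n\succeq 0$ is immediate from $L(p^2)=\bbE[p(X)^2]\ge 0$, and by the Schur complement the bordered PSD extension exists iff $(s_{n+1},\dots)^{T}\in\operatorname{range}(\Hmat_n)$, which follows because $c\in\ker\Hmat_n$ forces $p_c(X)=0$ a.s.\ and hence $\bbE[X^{j}p_c(X)]=0$ for the relevant $j\le k$.

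The genuine gap is the singular case of sufficiency, which you rightly call ``the heart of Curto and Fialkow's result'' and then do not prove. The three claims you list there---that a minimal-degree kernel polynomial of $\Hmat_n$ has only simple real roots, that the sequence is recursively generated, and that the resulting Vandermonde weights are nonnegative---are precisely the content of the theorem, and none of them follows from $\Hmat_n\succeq 0$ alone: for $(s_0,\dots,s_4)=(1,1,1,1,2)$ one has $\Hmat_2\succeq 0$ with kernel polynomial $1-x$ (a single simple real root), yet the candidate measure $\delta_1$ gives $s_4=1\neq 2$ and no representing measure exists, because recursiveness fails. What rules this out under your hypotheses is the extension: since a PSD matrix annihilates every vector on which its quadratic form vanishes, a kernel vector of $\Hmat_n$, padded with a zero, must lie in $\ker\Hmat_{n+1}$, and this propagates the column relation one degree further---that is the mechanism your sketch gestures at but never runs. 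Your two proposed exits, ``invoke this theorem as the cited black box'' or an unexecuted induction, are respectively circular (citing Curto--Fialkow to prove Curto--Fialkow) and absent. Since the paper itself only cites the Fact, a citation matches its standard of rigor; but as a proof, your proposal is incomplete at exactly the step that makes the result nontrivial.
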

	
	\subsection{Low-entropy moment problem}
	Now suppose there is a target continuous random variable $W$ on $\bbR$ and 
	we wish to find a discrete random variable $X$ on $\bbR$ that approximates 
	$W$ by matching as many of their initial moments as possible. Let $s_n = 
	\bbE[W^n]$ for $n \geq 0$, then by Fact~\ref{fact: hamburger}, we have 
	$\Hmat_n(s_0, \dots, s_{2n}) \succ 0$ for all $n \geq 1$. By 
	Fact~\ref{fact: hankel}, this implies that there exists a discrete random 
	variable $X$ of at most $m+1$ atoms with moments $\bbE[X^n] = s_n$ for $n = 
	0,\dots,2m+1$. Thus, for any continuous random variable on $\bbR$, there 
	exists a discrete random variable on $\bbR$ with at most $m$ atoms that has 
	the same first $2m-1$ moments. By allowing for a large enough $m$, it is 
	possible to match an arbitrarily large number of moments using a discrete 
	(even finite) random variable. 
	
	In particular, consider the special case where the target continuous random 
	variable is Gaussian. It is known that the $m$-point Gauss--Hermite 
	quadrature distribution has the same first $2m-1$ moments as the Gaussian 
	distribution, and that no other discrete random variable with as many or 
	fewer atoms can do the same \cite[Theorem 2]{wu2010impact}, 
	\cite{gauss1814methodus,stoer1980introduction}. It is also known that for 
	large $m$, the entropy of the Gauss--Hermite quadrature distribution is 
	approximately $\frac{1}{2}\log m$, which grows unboundedly as $m \to 
	\infty$. This seems to suggest that we require a large entropy to match an 
	arbitrary number of moments, which begs the following general question:
	\begin{quote}
		Given a real-valued continuous random variable, how many moments can be 
		matched by a discrete random variable that has a ``sufficiently small'' 
		entropy? 
	\end{quote}
	We show that for any continuous random variable, at most three moments can 
	be matched by such a low-entropy discrete random variable. This is stated 
	formally below.
	
	\begin{theorem}
		For any continuous random variable $W$ with finite moments $m_n = 
		\bbE[W^n]$, there exists a positive number $\eta(W) < \frac12$ such 
		that for any $h \in (0, \binent(\eta(W))$, among all discrete random 
		variables $X$ such that $H(X) \leq h$, 
		the largest $k$ such that $\bbE[X^n] = m_n$ for $n = 1,2,\dots,k$ is 
		three. 
		In particular, when $W$ is symmetric, we have
		\begin{equation*}
			\eta(W) = \begin{cases}
				\frac{m_2^2}{m_4} & \text{if } m_4 \geq 3 m_2^2,\\
				\frac{5m_2^2 - m_4}{9m_2^2 - m_4} & \text{if } m_4 <3 m_2^2.
			\end{cases}
		\end{equation*}
		\label{thm: main}
	\end{theorem}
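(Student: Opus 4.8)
The plan is to establish two things: first, that \emph{three} moments can always be matched by a discrete r.v.\ of arbitrarily small entropy; and second, that \emph{four} moments cannot be matched unless the entropy exceeds the stated threshold $\binent(\eta(W))$. The first part is the easy direction. By Fact~\ref{fact: hankel} with $k=3$, a discrete r.v.\ with at most $2$ atoms can match $m_1, m_2, m_3$ (the Hankel condition $\Hmat_1(1, m_1, m_2) \succ 0$ holds since $W$ is continuous, and we are free to pick $\tilde s_4$ large). After centering, we may take a $2$-atom distribution on $\{a, b\}$ matching the first three moments; to make its entropy small, I would exhibit an explicit family: place mass $p$ at a point far out and mass $1-p$ near the mean, solving the three moment equations as $p \to 0$. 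As $p \to 0$ the atom locations diverge/converge appropriately and $H(X) = \binent(p) \to 0$, so for \emph{any} $h>0$ there is a $2$-atom r.v.\ matching three moments — in particular for all $h < \binent(\eta(W))$.

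For the converse — that matching four moments forces $H(X) \ge \binent(\eta(W))$ — the key observation is that a discrete r.v.\ matching the first four moments of a continuous $W$ must have at least two atoms (one atom matches only the mean; and it cannot be continuous), and I claim the binding case is exactly two atoms. With two atoms at $\{a,b\}$ and masses $\{p, 1-p\}$, the four moment equations $m_1, m_2, m_3, m_4$ (four equations in three free parameters $a, b, p$ after using $s_0=1$) are \emph{overdetermined}; generically there is no solution, which already gives the flavor of the result, but I need the quantitative entropy bound. So instead I would argue: for any discrete $X$ matching $m_1, m_2, m_3, m_4$, lower-bound $H(X)$ in terms of the moments. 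The cleanest route is to fix WLOG $m_1 = 0$ (centering, as in Proposition~\ref{prop: exist}; the moment-matching property and entropy are translation covariant in the right way) and then use the four matched moments to constrain the atom weights. The Hankel matrix $\Hmat_2(1, 0, m_2, m_3, m_4)$ built from the matched moments must be PSD; its rank governs the minimal number of atoms via the Curto--Fialkow flat-extension theory. Since $W$ is continuous, $\Hmat_2 \succ 0$ from $W$'s side would suggest rank $3$, i.e.\ $3$ atoms — but a genuine $X$ matching only four moments corresponds to a \emph{different} completion, and the relevant object is $\Hmat_1(1,0,m_2)$ together with the constraint from $m_3, m_4$. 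The upshot I want is: the smallest atom weight $p_{\min}$ of any such $X$ satisfies $p_{\min} \le \eta(W)$, whence $H(X) \ge -p_{\min}\log p_{\min} - \dots \ge \binent(p_{\min}) \ge \binent(\eta(W))$ using monotonicity of $\binent$ on $(0,1/2)$ and $\eta(W) < 1/2$.

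To pin down $\eta(W)$ in the symmetric case, I would exploit symmetry: if $W$ is symmetric then $m_1 = m_3 = 0$, and the extremal $X$ is naturally the symmetric three-atom distribution with mass $p/2$ at $\pm t$ and mass $1-p$ at $0$. Matching $m_2$ and $m_4$ gives $p t^2 = m_2$ and $p t^4 = m_4$, so $t^2 = m_4/m_2$ and $p = m_2^2/m_4$. This is a valid probability ($p \le 1$) exactly when $m_4 \ge m_2^2$, which holds by Cauchy--Schwarz; this yields the first branch $\eta(W) = m_2^2/m_4$, and it is the minimal-weight configuration when the central atom carries the majority mass, i.e.\ $1-p \ge p$, equivalently $m_4 \ge 2m_2^2$ — but the stated threshold is $m_4 \ge 3m_2^2$, so there is a more delicate optimization (perhaps allowing an asymmetric perturbation or a four-atom competitor) that changes the crossover and produces the second branch $\frac{5m_2^2 - m_4}{9m_2^2 - m_4}$ when $m_4 < 3m_2^2$. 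Working out precisely which discrete configuration minimizes the smallest atom weight subject to matching $(m_1,m_2,m_3,m_4) = (0,m_2,0,m_4)$ — and showing no configuration does better — is the technical heart. \textbf{The main obstacle} is exactly this optimization: proving that over \emph{all} discrete $X$ (any number of atoms, not just the symmetric three-point guess) matching four moments, the minimum possible value of the largest atom mass is $1 - \eta(W)$, equivalently that the min-entropy-type quantity is controlled by $\eta(W)$; the two-branch form of $\eta(W)$ signals that the optimal configuration changes structure at $m_4 = 3m_2^2$, so the argument must handle both regimes, likely via a Lagrangian/support-reduction argument showing the optimum is attained at a configuration with at most three atoms.
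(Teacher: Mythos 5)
Your two-part outline (three moments achievable at arbitrarily small entropy; four moments force $H(X)\ge \binent(\eta(W))$) matches the paper's, and you correctly identify the Curto--Fialkow/Hankel machinery and the role of the largest atom. But both halves have genuine gaps. For the achievability half, your explicit family does not exist: with exactly two atoms $\{a,b\}$ and mass $p$ on $a$, the three equations $\bbE[X]=0$, $\bbE[X^2]=m_2$, $\bbE[X^3]=m_3$ have three unknowns and (generically) a unique solution --- the two-point quadrature of the moment sequence --- whose standardized third moment is $(2p-1)/\sqrt{p(1-p)}$, which diverges as $p\to 0$. So $p$ is pinned down by the skewness of $W$ and cannot be sent to $0$; the entropy $\binent(p)$ of a two-atom match is a fixed positive number. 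The correct construction needs \emph{three} atoms: a heavy atom $x_0$ of mass $1-\epsilon$ plus a two-atom piece of mass $\epsilon$ whose moments $s_n=(m_n-(1-\epsilon)x_0^n)/\epsilon$ absorb the corrections; the paper packages this as Lemma~\ref{lem: low-ent} combined with Fact~\ref{fact: hankel} applied to $\Hmat_2(1,s_1,s_2,s_3,\tilde s_4)$ with $\tilde s_4$ large.

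For the converse, the quantitative step is missing and the inequalities you write are inconsistent in direction: from $p_{\min}\le\eta(W)<\tfrac12$ and monotonicity of $\binent$ on $(0,\tfrac12)$ you would get $\binent(p_{\min})\le\binent(\eta(W))$, not $\ge$. The relevant quantity is the mass \emph{off} the heaviest atom, $\epsilon = 1-\max_x \Pr\{X=x\}$, for which $H(X)\ge\binent(\epsilon)$; one must show $\epsilon\ge\eta(W)$ for every four-moment match, equivalently that no $X$ with $\epsilon<\eta(W)$ can match $m_1,\dots,m_4$. The paper does this by passing to the conditional r.v.\ $\tilde X$ given $X\ne x_0$, computing its forced moments $s_n=(m_n-(1-\epsilon)x_0^n)/\epsilon$, and showing that for every $x_0$ and every $\epsilon<\eta(W)$ either $\det\Hmat_1(1,s_1,s_2)<0$ or $\det\Hmat_2(1,s_1,\dots,s_4)<0$, so no such $\tilde X$ exists by Fact~\ref{fact: hankel}. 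The two-branch form of $\eta(W)$ then falls out of minimizing the quartic $p(x)=m_2x^4+(m_4-3m_2^2)x^2$ over the admissible range $x_0^2\le\frac{\epsilon}{1-\epsilon}m_2$: the minimizer sits at $x_0=0$ when $m_4\ge 3m_2^2$ (giving $\frac{m_2^2}{m_4}$) and at an interior point otherwise (giving $\frac{5m_2^2-m_4}{9m_2^2-m_4}$). Your symmetric three-atom computation recovers only the achievable side of the first branch; the optimization you flag as ``the main obstacle'' is exactly this Hankel-determinant analysis, which you have not carried out.
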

	
	For the special case of the Gaussian random variable $W \sim \calN(0,1)$, 
	$\eta(W) = \frac 13$ and 
	$\binent(\eta(W)) \approx 0.92$ bits.
	Before proceeding to the proof of Theorem~\ref{thm: main}, it is worth 
	clarifying that the theorem makes two claims, for any continuous random 
	variable $W$ with finite moments:
	\begin{enumerate}
		\item[(i)] If the discrete random variable $X$ is such that $\bbE[X^n] 
		= \bbE[W^n]$ for $n = 1,2,3,4$, then $H(X) \geq \binent(\eta(W))$.
		\item[(ii)] For any $h > 0$, there exists a discrete random variable 
		$X$ such that $H(X) \leq h$ and $\bbE[X^n] = \bbE[W^n]$ for $n
		= 1,2,3$, i.e., it is possible to match three moments of any continuous 
		random variable with a discrete random variable of arbitrarily low 
		entropy.
	\end{enumerate} 
	
	\subsection{Proof of Theorem 2} \label{sec: proof}
	
	We first characterize random variables with ``small'' entropy as being a 
	random combination of a probability mass at a single atom and some discrete 
	random variable with finite but not necessarily ``small'' entropy 
	(Lemma~\ref{lem: low-ent}). We then use this characterization, together 
	with Fact~\ref{fact: hankel}, to show that if the entropy is smaller than 
	some constant depending on the continuous random variable (e.g., 
	$\binent(1/3) \approx 0.92$ bits for the Gaussian distribution), we can 
	match at most the first three moments. Finally, we show that even if the 
	entropy is to be arbitrarily small, the first three moments can still be 
	matched.

	\begin{lemma} 
		For any $h \in (0,\log 2)$, the following statements are equivalent:
		\begin{enumerate}
			\item[(i)] The random variable $X$ satisfies $0 < H(X) \leq  h$.
			\item[(ii)] There exists $x_0 \in \bbR$, $\epsilon \in 
			\left(0,\frac{1}{2}\right)$ such that $\binent(\epsilon) \leq h$, 
			and a discrete random variable $\tilde X$ with $\Pr\{\tilde X = 
			x_0\} = 0$ (i.e.\ $x_0$ is not an atom of $\tilde X$) and entropy 
			$H(\tilde X) \leq \frac{h - \binent(\epsilon)}{\epsilon}$, such 
			that  
			\begin{equation}
				X = U x_0 + (1-U) \tilde X, \label{eqn: lemma}
			\end{equation}
			where $U$ is a binary random variable independent of $\tilde X$, 
			taking values in $\{0,1\}$ with $\Pr\{U = 0\} = \epsilon$.
		\end{enumerate} \label{lem: low-ent}
	\end{lemma}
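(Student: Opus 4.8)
The plan is to prove the two implications separately; all the content lies in $(i)\Rightarrow(ii)$, while $(ii)\Rightarrow(i)$ is a short entropy computation. For $(ii)\Rightarrow(i)$: given the representation \eqref{eqn: lemma} in which $\tilde X$ places no mass at $x_0$, the event $\{U=1\}$ coincides (up to a null set) with $\{X=x_0\}$, so $U$ is a deterministic function of $X$. Hence the chain rule gives $H(X)=H(X,U)=H(U)+H(X\mid U)=\binent(\epsilon)+\epsilon\,H(\tilde X)$, using $\Pr\{U=0\}=\epsilon$ and $H(X\mid U=1)=0$. The hypothesis $H(\tilde X)\le(h-\binent(\epsilon))/\epsilon$ then yields $H(X)\le h$, and $\epsilon\in(0,\tfrac12)$ yields $H(X)\ge\binent(\epsilon)>0$.

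For $(i)\Rightarrow(ii)$: first, since $H(X)\le h<\infty$, the r.v.\ $X$ is discrete (a non-discrete law has infinite entropy under the convention in force, as noted after \eqref{eqn: C_H}). The crux is then the observation that an entropy strictly below $\log 2$ forces a single dominant atom: if every atom of $X$ had mass $p_i\le\tfrac12$, then $-\log p_i\ge\log 2$ for each atom, and so $H(X)=\sum_i p_i(-\log p_i)\ge\log 2$, contradicting $H(X)\le h<\log 2$. Therefore there is a (necessarily unique) atom $x_0$ with $\Pr\{X=x_0\}>\tfrac12$; set $\epsilon=1-\Pr\{X=x_0\}$, so $\epsilon<\tfrac12$, while $\epsilon>0$ because $H(X)>0$ rules out $X$ being a constant. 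Now let $\tilde X$ have the law of $X$ conditioned on $\{X\ne x_0\}$ (well defined since $\epsilon>0$); it is discrete and, by construction, has no atom at $x_0$. Take $U$ to be Bernoulli with $\Pr\{U=0\}=\epsilon$, chosen independent of $\tilde X$. Then $Ux_0+(1-U)\tilde X$ equals $x_0$ with probability $1-\epsilon$ and has the law of $\tilde X$ with probability $\epsilon$, which is precisely the law of $X$, so \eqref{eqn: lemma} holds. The same grouping identity as above gives $H(X)=\binent(\epsilon)+\epsilon\,H(\tilde X)$, hence $\binent(\epsilon)\le H(X)\le h$ and $H(\tilde X)=(H(X)-\binent(\epsilon))/\epsilon\le(h-\binent(\epsilon))/\epsilon$, as required.

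I do not expect a genuine obstacle here; the single load-bearing step is the dominant-atom claim, and the points that need care are: invoking the entropy convention to force discreteness of $X$; keeping the inequality $\Pr\{X=x_0\}>\tfrac12$ \emph{strict} (so that $\epsilon<\tfrac12$, not merely $\le\tfrac12$), which is exactly where the strictness $h<\log 2$ is used; and checking that declaring $U$ independent of $\tilde X$ is consistent with recovering the law of $X$ — this works precisely because $x_0$ is excluded from the support of $\tilde X$, so the grouping identity $H(X)=\binent(\epsilon)+\epsilon H(\tilde X)$ holds exactly and no probability mass at $x_0$ leaks from the $\tilde X$ component.
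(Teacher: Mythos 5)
Your proof is correct and follows essentially the same route as the paper's: the same grouping identity $H(X)=\binent(\epsilon)+\epsilon H(\tilde X)$ for the easy direction, and the same dominant-atom argument (all masses $\le\tfrac12$ would force $H(X)\ge\log 2$) for the converse. The only cosmetic difference is that the paper realizes $U$ and $\tilde X$ on the same probability space as $X$ (randomizing $\tilde X$ on the event $\{X=x_0\}$) so that \eqref{eqn: lemma} holds pointwise, whereas you verify equality in distribution, which suffices for the intended use.
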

	\begin{proof}
		{(ii) $\implies$ (i): } Assume that there exist quantities $x_0, 
		\epsilon, \tilde X, U$ as given in statement (ii) and let $X$ be given 
		by \eqref{eqn: lemma}.
		For such $X$ and $\tilde X$, we can derive a relation between their 
		entropies as follows. Consider the joint entropy $H(X,U)$, which is 
		equal to $H(X) + H(U \mid X)$, by the chain rule of entropy. As 
		$\Pr\{\tilde X = x_0\} = 0$, we know that $U = 1$ if and only if $X = 
		x_0$, and hence $H(U \mid X) = 0$. The joint entropy is also equal to 
		$H(U) + H(X \mid U)$. The first term is equal to $\binent(\epsilon)$, 
		and the second term is equal to $\epsilon H(X \mid U = 0) + 
		(1-\epsilon) H(X \mid U = 1)$. When $U = 0$, we have $X = \tilde X$, 
		and hence $H(X \mid U = 0) = H(\tilde X)$. Similarly, when $U = 1$, we 
		have $X = x_0$, and $H(X \mid U = 1) = 0$. Putting everything together, 
		we have that $H(X) = H(X,U) = \binent(\epsilon) + \epsilon H(\tilde 
		X)$. Since $\binent(\epsilon) \leq h$ and $H(\tilde X) \leq \frac{h - 
		\binent(\epsilon)}{\epsilon}$, we have $H(X) \leq h$. Further, since 
		$\epsilon > 0$, $H(X) \geq \binent(\epsilon) > 0$, and we are done.
		
		{(i) $\implies$ (ii): } Let $X$ be a random variable with $H(X) \in (0, 
		h]$ for some $h \in (0, \log 2)$ and let $\supp(X)$ denote its support, 
		i.e., set of $x \in \bbR$ such that $\Pr\{X = x\} > 0$. Also let 
		$\epsilon = 1 - \max_{x \in \supp(X)} \Pr\{X = x\}$ and $x_0 = 
		\argmax_{x \in \supp(X)} \Pr\{X = x\}$. Note that this maximum is 
		well-defined even if the support is (countably) infinite, as the sum of 
		$\Pr\{X = x\}$ over all $x \in \supp(X)$ is 1. Clearly, $\epsilon \in 
		[0,1]$; we claim that $\epsilon$ must lie in $\left(0, 
		\frac{1}{2}\right)$. That $\epsilon > 0$ is trivial---if $\epsilon = 
		0$, we have $\Pr\{X = x_0\} = 1$, implying that $H(X) = 0$ which is a 
		contradiction. On the other hand, if $\epsilon \geq \frac12$, we have 
		that $\Pr\{X = x\} \leq \frac12$ for all $x \in \bbR$, and hence, $H(X) 
		= \sum_{x \in \supp(X)} \Pr\{X = x\} \log \frac{1}{\Pr\{X = x\}} \geq 
		\log 2 > h \geq H(X),$ which cannot be. Let $U = \indic\{X = x_0\}$ 
		such that $\Pr\{U = 0\} = \Pr\{X \neq x_0\} = \epsilon$. We define the 
		random variable $\tilde X$ as follows. If $X = x$ for some $x \neq 
		x_0$, set $\tilde X = x$. If $X = x_0$, randomly set $\tilde X$ to be 
		any value $x \in \supp(X) \setminus \{x_0\}$ with probability 
		$\frac{1}{\epsilon}\Pr\{X = x\}$. This choice makes $U$ and $\tilde X$ 
		to be independent, as $\Pr\{\tilde X = x \mid U = 0\}$ is equal to 
		$$\Pr\{\tilde X = x \mid X \neq x_0\} = \Pr\{X = x \mid X \neq x_0\} = 
		\frac{\Pr\{X = x\}}{\Pr\{X \neq x_0\}} = \frac{\Pr\{X = x\}}{\epsilon} 
		= \Pr\{\tilde X = x \mid X = x_0\},$$
		which is equal to $\Pr\{\tilde X = x \mid U = 1\}$. Note that 
		$\Pr\{\tilde X = x_0\} = 0$ and we always have $X = U x_0 + (1-U)\tilde 
		X$. By the same calculation as in the proof of (ii) $\implies$ (i), we 
		have that $H(X) = \binent(\epsilon) + \epsilon H(\tilde X) \leq h$, and 
		hence, $H(\tilde X)$ must be at most $\frac{h - 
		\binent(\epsilon)}{\epsilon}$. This completes the proof of the 
		equivalence of (i) and (ii).
	\end{proof}
	
	We are now ready to prove Theorem~\ref{thm: main}.
	We show that 
	\begin{enumerate}
		\item[(i)] any discrete random variable with entropy at most 
		$\binent(\eta(W))$ can match at most the first three moments with the 
		continuous random variable $W$, and 
		\item[(ii)] there exist random variables with arbitrarily small entropy 
		that can still match the first three moments. 
	\end{enumerate}
	Note that to simplify calculations, we may assume that $m_1 = 0$. Let $X' = 
	X - m_1$ and let $m'_n = \bbE[{X'}^n]$.  If $X$ has moments $m_n$, then 
	$m'_n = \sum_{i=0}^n (-1)^i {n \choose i} m_{n-i} m_1^{i}$, i.e., $m'_1 = 
	0$, $m'_2 = m_2 - m_1^2$, $m'_3 = m_3 - 3 m_1 m_2 + 2m_1^3$, and so on. 
	Thus, we assume without loss of generality that $m_1 = 0$, remembering that 
	we must replace $m_2$ by $m_2 - m_1^2$, $m_3$ by $m_3 - 3 m_1 m_2 + 
	2m_1^3$, and so on, in the final expression. However, since we only provide 
	an explicit expression for $\eta(W)$ in the case where $W$ is symmetric, we 
	have $m_1 = 0$ anyway. 
	
	\paragraph{Proof of (i): } The proof can be summarized as follows. We use 
	Lemma~\ref{lem: low-ent} to conclude that any random variable $X$ such that 
	$H(X) < \binent(\eta(W))$ is of the form \eqref{eqn: lemma} for some $x_0 
	\in \bbR$, $\epsilon < \eta(W)$ and discrete random variable $\tilde X$ 
	which has no mass at $x_0$. For $X$ to have moments $\bbE[X^n] = m_n$, 
	$\tilde X$ must have moments $\bbE[\tilde X^n] = s_n = \frac{m_n - 
	(1-\epsilon)x_0^n}{\epsilon}$. We then show that $\Hmat_2(1, s_1, \dots, 
	s_4)$ cannot be positive semidefinite for any choice of $\epsilon < 
	\eta(W)$. Thus, there is no choice of $\tilde s_5$ and $\tilde s_6$ that 
	makes $\Hmat_2(1, s_1, \dots, s_4, \tilde s_5, \tilde s_6)$ positive 
	semidefinite. By Fact~\ref{fact: hankel}, we have that no $\tilde X$ has 
	$s_1, \dots, s_4$ as the first four moments, and hence, no $X$ with $H(X) < 
	\binent(\eta(W))$ has $m_1, \dots, m_4$ as the first four moments.
	
	First note that if $H(X) = 0$, the only possibility is that $X = x$ with 
	probability 1 for some $x \in \bbR$. Since $\bbE[X] = x$, we must have $x = 
	m_1$ to match at least one moment. Since $W$ is a continuous random 
	variable, we have, in particular, that 
	$$\Hmat_1(1,m_1,m_2) = \begin{pmatrix}
		1 & m_1 \\
		m_1 & m_2
	\end{pmatrix}\succ 0,$$
	implying that $m_2 - m_1^2 > 0$. Hence, $m_2 \neq m_1^2 = \bbE[X^2]$, and 
	with $H(X) = 0$, we can match at most one moment. 
	
	Now suppose $0 < H(X) \leq h < \binent(\eta(W))$. 
	By Lemma~\ref{lem: low-ent}, $X$ must be of the form $U x_0 + (1-U) \tilde 
	X$ for some $\epsilon \in (0,\frac{1}{2})$ such that $\binent(\epsilon) 
	\leq h$, $x_0 \in \bbR$, $\tilde X$ with $\Pr\{\tilde X = x_0\} = 0$, and 
	$U \in \{0,1\}$ independent of $\tilde X$. Since $\binent(\epsilon) \leq h 
	< \binent(\eta(W))$ and $\eta(W) \leq \frac12$, we must also have $\epsilon 
	< \eta(W)$.
	Then, the $n$-th moment of $X$ can be written as
	\begin{align*}
		\bbE[X^n] &= \sum_{i=0}^n {n \choose i} \bbE\left[(U x_0)^i((1-U) 
		\tilde X)^{n-i}\right] = \sum_{i=0}^n {n \choose i} x_0^i\bbE\left[U^i 
		(1-U)^{n-i}\right] \bbE\big[ \tilde X^{n-i}\big],
	\end{align*}
	since $U$ and $\tilde X$ are independent. Note that $\bbE\left[U^i 
	(1-U)^{n-i}\right]$ is zero for all $i$ except $0$ and $n$.  Hence, the 
	above sum is simply
	\begin{align*}
		\bbE[X^n] &= \bbE[(1-U)^n]\bbE[\tilde X^n] + x_0^n\bbE[U^n] = \epsilon 
		\bbE[\tilde X^n] + (1-\epsilon) x_0^n.
	\end{align*}
	This implies that there exists $X$ such that $\bbE[X^n] = m_n$ for $n = 
	1,2,\dots,k$ if and only if there exists $\tilde X$ such that $\bbE[\tilde 
	X^n] = s_n$ for $n = 1,\dots,k$, with
	\begin{equation*}
		s_n = \frac{m_n}{\epsilon} -  \frac{1 - \epsilon}{\epsilon}x_0^n.
	\end{equation*}
	
	To have $k \geq 4$, by Fact~\ref{fact: hankel}, we must have 
	$\Hmat_{2}(1,s_1,\dots,s_4) \succeq 0$, which happens if and only if all of 
	its leading principal minors are non-negative, i.e.,
	\begin{equation*}
		1 \geq 0, \qquad \det \begin{pmatrix}
			1 & s_1 \\
			s_1 & s_2 
		\end{pmatrix} \geq 0, \quad \text{and} \quad \det \begin{pmatrix}
			1 & s_1 & s_2\\
			s_1 & s_2 & s_3\\
			s_2 & s_3 & s_4
		\end{pmatrix} \geq 0.
	\end{equation*}
	The first is obviously true. 
	The determinant of $\begin{pmatrix}
		1 & s_1 \\
		s_1 & s_2 
	\end{pmatrix}$ is equal to $s_2 - s_1^2 = \frac{m_2}{\epsilon} - 
	\frac{1-\epsilon}{\epsilon^2}x_0^2$, which is non-negative if and only if 
	$x_0^2 \leq \frac{\epsilon}{1 - \epsilon}m_2 $.  
	
	Observe that the last inequality is exactly $\det 
	\Hmat_{2}(1,s_1,\dots,s_4) \geq 0$, which we now show to be false for any 
	choice of $x_0^2 \leq \frac{\epsilon}{1 - \epsilon}m_2 $, for any $\epsilon 
	< \eta(W)$, which implies, by Fact~\ref{fact: hankel}, that $k \leq 3$. 
	Hence, no $X$ with $H(X) \leq h < \binent(\eta(W))$ has $m_1,m_2,m_3,m_4$ 
	has the first four moments.
	The determinant of $\Hmat_{2}(1,s_1,\dots,s_4)$ is given by 
	\begin{align*}
		\det \begin{pmatrix}
			1 & s_1 & s_2\\
			s_1 & s_2 & s_3\\
			s_2 & s_3 & s_4
		\end{pmatrix}
		& = \det \begin{pmatrix}
			1 &  - \frac{1-\epsilon}{\epsilon}x_0 & \frac{m_2}{\epsilon} -  
			\frac{1 - \epsilon}{\epsilon}x_0^2 \\
			- \frac{1-\epsilon}{\epsilon}x_0  & \frac{m_2}{\epsilon} -  \frac{1 
			- \epsilon}{\epsilon}x_0^2 & \frac{m_3}{\epsilon} -  \frac{1 - 
			\epsilon}{\epsilon}x_0^3 \\
			\frac{m_2}{\epsilon} -  \frac{1 - \epsilon}{\epsilon}x_0^2 &  
			\frac{m_3}{\epsilon} -  \frac{1 - \epsilon}{\epsilon}x_0^3 & 
			\frac{m_4}{\epsilon} -  \frac{1 - \epsilon}{\epsilon}x_0^4
		\end{pmatrix}\\
		&= \frac{1}{\epsilon^3} \det \begin{pmatrix}
			\epsilon & (\epsilon-1)x_0 & m_2 -  (1 - \epsilon)x_0^2 \\
			(\epsilon-1)x_0  & m_2 -  (1 - \epsilon)x_0^2 & m_3 - 
			(1-\epsilon)x_0^3 \\
			m_2 -  (1 - \epsilon)x_0^2 &  m_3 - (1-\epsilon)x_0^3  & m_4 -  (1 
			- \epsilon)x_0^4
		\end{pmatrix}.
	\end{align*}
	Using row reductions to simplify calculations, we get that the above 
	determinant is equal to $\frac{\alpha}{\epsilon^2} - 
	\frac{\beta}{\epsilon^3}$, where 
	\begin{align*}
		\alpha &=  m_2 x_0^4 - 2m_3 x_0^3 + ( m_4 - 3m_2^2 ) x_0^2  + 2 m_2 m_3 
		x_0 + (m_2 m_4 - m_3^2),\\
		\beta &= m_2 x_0^4 - 2m_3 x_0^3 + ( m_4 - 3m_2^2 ) x_0^2  + 2 m_2 m_3 
		x_0 + m_2^3.
	\end{align*}
	Let the polynomial $p(x) = m_2 x^4 - 2m_3 x^3 + ( m_4 - 3m_2^2 ) x^2  + 2 
	m_2 m_3 x$, then we have $\alpha = p(x_0) + (m_2 m_4 - m_3^2)$ and $\beta = 
	p(x_0) + m_2^3$. Note that $x_0^2 \leq \frac{\epsilon}{1 - \epsilon}m_2 $ 
	and $p(0) = 0$. Hence, for sufficiently small $\epsilon$ (say $\epsilon < 
	\epsilon_1$), we have that $\beta = m_2^3 + p(x_0) > \frac{m_2^3}{2}$ for 
	any choice of $x_0$ such that $x_0^2 \leq \frac{\epsilon}{1 - \epsilon}m_2 
	$. Further, $\det \Hmat_{2}(1,s_1,\dots,s_4) = \frac{1}{\epsilon^3}(\alpha 
	\epsilon - \beta)$. For $\epsilon < \epsilon_1$, we have that this 
	expression is at most $\frac{1}{\epsilon^3}(\alpha \epsilon - 
	\frac{m_2^3}{2})$, which is guaranteed to be negative for sufficiently 
	small $\epsilon$ (say $\epsilon < \epsilon_2$). 
	
	Combining the above and taking $\eta(W) = \min\{\epsilon_1, \epsilon_2\}$, 
	for any $\epsilon < \eta(W)$, and $x_0 \in \bbR$, we have $\Hmat_2(1, s_1, 
	\dots, s_4) \not\succeq 0$, since either $\det \begin{pmatrix}
		1 & s_1 \\
		s_1 & s_2 
	\end{pmatrix} < 0$ or $\det \begin{pmatrix}
		1 & s_1 & s_2\\
		s_1 & s_2 & s_3\\
		s_2 & s_3 & s_4
	\end{pmatrix} < 0$. 
	Thus, there is no choice of $\tilde s_5$ and $\tilde s_6$ that makes 
	$\Hmat_3(1, s_1, \dots, s_4, \tilde s_5, \tilde s_6)$ positive 
	semidefinite. By Fact~\ref{fact: hankel}, no $\tilde X$ has $s_1, \dots, 
	s_4$ as the first four moments, and hence, no $X$ with $H(X) < 
	\binent(\eta(W))$ has $m_1, \dots, m_4$ as the first four moments.
	
	We can obtain an explicit expression for $\eta(W)$  in the special case 
	when $W$ is symmetric. In this case, we have $p(x) = m_2 x^4 + (m_4 - 
	3m_2^2) x^2$, $\alpha = p(x_0) + m_2 m_4$, and $\beta = p(x_0) + m_2^3$. 
	The determinant of $\Hmat_{2}(1,s_1,\dots,s_4)$ is given by
	\begin{equation*}
		\det \Hmat_{2}(1,s_1,\dots,s_4) = \frac{1}{\epsilon^3} (\epsilon \alpha 
		- \beta) = \frac{1}{\epsilon^3} \big[ -(1-\epsilon) p(x_0) + (\epsilon 
		m_2 m_4 - m_2^3)\big].
	\end{equation*}
	
	{Case 1: } If $m_4 - 3 m_2^2 \geq 0$, the minimum of $p(x_0)$ is attained 
	at $x_0 = 0$, and hence, the determinant $\det \Hmat_{2}(1,s_1,\dots,s_4) 
	\leq \frac{1}{\epsilon^3} (\epsilon m_2 m_4 - m_2^3)$, which is negative 
	for $\epsilon < \frac{m_2^2}{m_4}$. 
	
	{Case 2: } On the other hand, if $m_4 - 3 m_2^2 < 0$, the minimum of 
	$p(x_0)$ over $x_0^2 \leq \frac{\epsilon}{1 - \epsilon}m_2$ is attained at 
	$x_0^2$ equal to the minimum of $\frac{\epsilon}{1 - \epsilon}m_2$ and 
	$\frac{3m_2^2 - m_4}{2m_2}$ (the latter is the position of the global 
	minimum of $p$). If $\frac{\epsilon}{1 - \epsilon}m_2 \leq \frac{3m_2^2 - 
	m_4}{2m_2}$, or equivalently, $\epsilon \leq \frac{3m_2^2 - m_4}{5m_2^2 - 
	m_4}$, we have $\min_{x_0^2 \leq \frac{\epsilon}{1 - \epsilon}m_2} p\left( 
	x_0\right) = p\left(\sqrt{\frac{\epsilon}{1 - \epsilon}m_2}\right) =  
	\frac{\epsilon^2}{(1 - \epsilon)^2}m_2^3 + (m_4 - 3m_2^2) \frac{\epsilon}{1 
	- \epsilon}m_2$. Hence, the determinant is upper bounded as 
	\begin{align*}
		\det \Hmat_{2}(1,s_1,\dots,s_4) &\leq \frac{1}{\epsilon^3} \left[ 
		-\frac{\epsilon^2}{(1 - \epsilon)}m_2^3 - \epsilon (m_4 - 3m_2^2)  m_2 
		+  (\epsilon m_2 m_4 - m_2^3)\right]\\
		&= \frac{1}{\epsilon^3} \left[ -\frac{\epsilon^2}{(1 - \epsilon)}m_2^3 
		+  3 \epsilon m_2^3  - m_2^3 \right]\\
		& = \frac{m_2^3}{\epsilon^3 (1-\epsilon)} \big[ -\epsilon^2 + 
		(1-\epsilon)(3\epsilon - 1) \big]\\
		& = \frac{m_2^3}{\epsilon^3 (1-\epsilon)} \big[ -\epsilon^2 + 3\epsilon 
		- 1 - 3\epsilon^2 + \epsilon \big]\\
		& = -\frac{m_2^3}{\epsilon^3 (1-\epsilon)} ( 1-2\epsilon )^2,        
	\end{align*}
	which is always negative, as $\epsilon \leq \frac{3m_2^2 - m_4}{5m_2^2 - 
	m_4} < \frac12$. Hence, the above expression is always negative, and we 
	have that $\det \Hmat_{2}(1,s_1,\dots,s_4)$ is always negative for 
	$\epsilon < \frac{3m_2^2 - m_4}{5m_2^2 - m_4}$. Instead, if $\epsilon > 
	\frac{3m_2^2 - m_4}{5m_2^2 - m_4}$, then the minimum of $p(x_0)$ is 
	attained at $x_0^2 = \frac{3m_2^2 - m_4}{2m_2}$, with 
	$p\left(\sqrt{\frac{3m_2^2 - m_4}{2m_2}}\right) = -m_2\left(\frac{3m_2^2 - 
	m_4}{2m_2}\right)^2$. Then, the determinant is upper bounded as 
	\begin{align*}
		\det \Hmat_{2}(1,s_1,\dots,s_4) &\leq \frac{1}{\epsilon^3} \left[ 
		(1-\epsilon)m_2\left(\frac{3m_2^2 - m_4}{2m_2}\right)^2 + (\epsilon m_2 
		m_4 - m_2^3)\right]\\
		& = \frac{1}{4\epsilon^3 m_2^2} \left[ \epsilon(- 9 m_2^4 + 10m_2^2 m_4 
		- m_4^2) - (- 5 m_2^4 + 6m_2^2 m_4 - m_4^2) \right]\\
		& = \frac{m_4 - m_2^2}{4\epsilon^3 m_2^2} \left[ \epsilon(9m_2^2 - m_4) 
		- (5m_2^2 - m_4) \right], 
	\end{align*}
	which is negative for $\epsilon < \frac{5m_2^2 - m_4}{9m_2^2 - m_4}$, as 
	$m_4 > m_2^2$ (these are the first and second moments of the random 
	variable $W^2$). Note that $\frac{5m_2^2 - m_4}{9m_2^2 - m_4} > 
	\frac{3m_2^2 - m_4}{5m_2^2 - m_4}$, hence we have that the determinant is 
	negative if $\frac{3m_2^2 - m_4}{5m_2^2 - m_4} < \epsilon < \frac{5m_2^2 - 
	m_4}{9m_2^2 - m_4}$. Thus, for the case when $m_4 < 3m_2^2$, we have that 
	the determinant is negative if either $\epsilon < \frac{3m_2^2 - 
	m_4}{5m_2^2 - m_4}$ or $\frac{3m_2^2 - m_4}{5m_2^2 - m_4} < \epsilon < 
	\frac{5m_2^2 - m_4}{9m_2^2 - m_4}$, which is equivalent to $\epsilon < 
	\frac{5m_2^2 - m_4}{9m_2^2 - m_4}$.
	
	A summary of the above is as follows: If $m_4 \geq 3m_2^2$, then for every 
	$\epsilon < \frac{m_2^2}{m_4}$, we have either $\det 
	\Hmat_2(1,s_1,\dots,s_4) \leq 0$ (by choosing $x_0$ such that $x_0^2 \leq 
	\frac{\epsilon}{1 - \epsilon}m_2$) or $\det \Hmat_1(1,s_1,s_2) \leq 0$ (for 
	$x_0^2 > \frac{\epsilon}{1 - \epsilon}m_2$). Similarly, if $m_4 < 3m_2^2$, 
	for every $\epsilon < \frac{5m_2^2 - m_4}{9m_2^2 - m_4}$, we again have 
	either $\det \Hmat_2(1,s_1,\dots,s_4) \leq 0$ for $x_0^2 \leq 
	\frac{\epsilon}{1 - \epsilon}m_2$) or $\det \Hmat_1(1,s_1,s_2) \leq 0$ (for 
	$x_0^2 > \frac{\epsilon}{1 - \epsilon}m_2$). Thus, by defining $\eta(W)$ as 
	\begin{equation*}
		\eta(W) = \begin{cases}
			\frac{m_2^2}{m_4} & \text{if } m_4 \geq 3 m_2^2,\\
			\frac{5m_2^2 - m_4}{9m_2^2 - m_4} & \text{if } m_4 <3 m_2^2,
		\end{cases}
	\end{equation*}
	for any $\epsilon < \eta(W)$, we have $\Hmat_{2}(1,s_1,\dots,s_4) 
	\not\succeq 0$ for any $x_0 \in \bbR$. 
	Hence, no $X$ with $H(X) < \binent(\eta(W))$ has $m_1, m_2,\dots, m_4$ as 
	the first four moments. Note that $\eta < \frac12$ always, as for $m_4 \geq 
	3m_2^2$, we have that $\eta(W) = \frac{m_2^2}{m_4} \leq \frac13$, and for 
	$m_4 < 3 m_2^2$, we have $\frac{5m_2^2 - m_4}{9m_2^2 - m_4} < \frac12$, 
	since $m_4 > m_2^2$. 
	
	\paragraph{Proof of (ii): } Let $h > 0$ be arbitrary. We are to show that 
	there exists $X$ with $H(X) \leq h$ with $\bbE[X^n] = m_n$ for $n = 1,2,3$. 
	By Lemma~\ref{lem: low-ent}, such an $X$ exists if and only if there is 
	some $x_0 \in \bbR$, $\epsilon \in \left(0, \frac12\right)$ such that 
	$\binent(\epsilon) \leq h$, and $\tilde X$ with $H(\tilde X) \leq \frac{h - 
	\binent(\epsilon)}{\epsilon}$. Let $\epsilon$ be such that $2 
	\binent(\epsilon) = h$, then we must have $H(\tilde X) \leq 
	\frac{\binent(\epsilon)}{\epsilon}$, which is more than $2\log 2$ for all 
	$\epsilon \in \left(0,\frac12\right)$.
	We show that for every $\epsilon \in (0, \eta(W)) \subseteq 
	\left(0,\frac12\right)$, there exists some choice of $x_0 \in \bbR$ and 
	$\tilde s_4 \in \bbR$ such that the Hankel matrix $\Hmat_2(1, s_1, s_2, 
	s_3, \tilde s_4)$ is positive semidefinite. Fact~\ref{fact: hankel} then 
	guarantees the existence of some $\tilde X$ with at most two atoms and 
	$s_1, s_2, s_3$ as the first three moments. Since $\tilde X$ is supported 
	on at most two atoms, $H(\tilde X) \leq \log 2  < 
	\frac{\binent(\epsilon)}{\epsilon}$.  Hence, by Lemma~\ref{lem: low-ent}, 
	there exists a discrete random variable $X$ with at most three atoms, $H(X) 
	\leq h$, and $m_1, m_2, m_3$ as the first three moments. 
	
	Again, the matrix $\Hmat_2(1, s_1, s_2, s_3, \tilde s_4)$ is positive 
	semidefinite if and only if all of its leading principal minors are 
	non-negative, i.e.,  
	\begin{equation*}
		1 \geq 0, \qquad \det \begin{pmatrix}
			1 & s_1 \\
			s_1 & s_2 
		\end{pmatrix} \geq 0, \quad \text{and} \quad \det \begin{pmatrix}
			1 & s_1 & s_2\\
			s_1 & s_2 & s_3\\
			s_2 & s_3 & \tilde s_4
		\end{pmatrix} \geq 0.
	\end{equation*}
	The first of these is trivially true. As seen in the proof of the previous 
	part, we can ensure that $\det \begin{pmatrix}
		1 & s_1 \\
		s_1 & s_2 
	\end{pmatrix} > 0$ by choosing $x_0$ such that $x_0^2 < 
	\frac{\epsilon}{1-\epsilon} m_2$. The third can also be ensured by choosing 
	a sufficiently large $\tilde s_4$, as the determinant is a linear function 
	of $\tilde s_4$ with a positive coefficient (this can be seen by expanding 
	the determinant along the third row or column). Since this can be done for 
	any arbitrarily small $\epsilon > 0$, we are done.        \hfill 
	$\blacksquare$

	\section{Discussion and Conclusion} \label{sec: conc}
	
	We considered the problem of computing the capacity for a power-constrained 
	Gaussian channel with an input entropy constraint. We characterized this 
	capacity in asymptotic regimes of low and high entropy at all (constant) 
	values of SNR, and low SNR at sufficiently small entropy. However, 
	identifying a capacity-achieving distribution, which is guaranteed to exist 
	by Proposition~\ref{prop: exist}, remains an open problem in all regimes. 
	Even obtaining non-trivial upper and lower bounds for $C_H$ at intermediate 
	values of entropy and SNR would be useful. The major difficulty in solving 
	the optimization problem defining the capacity is that the entropy 
	constraint makes the problem non-convex. In fact, the feasible set is the 
	complement of a convex set, known as a \emph{reverse convex} constraint in 
	the optimization literature \cite{tuy1987convex}.
	
	An estimation-theoretic interpretation of $C_H$ is the following. By the 
	I-MMSE relationship \cite{guo2004immse}, we have $I(X, \snr) = \frac12 
	\int_0^\snr \mmse(X, \gamma) \d \gamma$ and $H(X) = \frac12 \int_0^\infty 
	\mmse(X, \gamma) \d \gamma$, where $\mmse(X, \gamma)$ is the minimum mean 
	squared error (MMSE) of estimating $X$ from $Y = \sqrt{\gamma} X + Z$. 
	Hence, in computing $C_H$, we consider all distributions on $X$ that have 
	the same total integral under the curve $\gamma \mapsto \mmse(\gamma)$ over 
	$[0, \infty)$, and choose one which maximizes the value of the integral 
	over the range $[0,\snr]$. This implies that the maximizing $X$ should have 
	a large MMSE at small SNR and vice-versa, with the transition as sharp as 
	possible at SNR equal to $\snr$.
	
	To characterize $C_H$ in the low-SNR regime at small entropy, we first 
	solved a low-entropy version of the moment problem.
	We know that it is necessary to have the cardinality of the support grow to 
	infinity to match arbitrarily many moments, but
	Theorem~\ref{thm: main} shows that this is not sufficient. 
	It is necessary to have a discrete distribution of non-vanishing entropy to 
	match arbitrarily many moments of a continuous distribution, even if the 
	cardinality of the support is allowed to be arbitrarily large. 
	Interestingly, if the entropy is a sufficiently small (but even 
	non-vanishing) constant, no more than three moments can be matched.
	Recall the Gaussian example: the Gauss--Hermite quadrature provides an 
	$m$-point discrete distribution that has an entropy which grows 
	approximately as $\frac{1}{2}\log m$ and matches $2m-1$ moments. This 
	solution is optimal in the sense that no other distribution of $m$ points 
	or fewer can match $2m-1$ moments, but there are infinitely many solutions 
	with more than $m$ points.
	Another interesting question is then the following: is it necessary for the 
	entropy to grow to infinity to match arbitrarily many moments with a 
	continuous distribution? Equivalently, 
	does there exist $h > 0$ such that there are $m$-point distributions which 
	match $k_m$ moments (for some sequence $(k_m)_{m = 0}^\infty$ which goes to 
	infinity as $m \to \infty$, 
	possibly with $k_m < 2m-1$) but such that the entropy is uniformly bounded 
	above by $h$ for all $m$?

	The answer turns out to be no, for the following reason. Suppose we had a 
	sequence of discrete random variables $(X_m)_{m=0}^{\infty}$ such that 
	$X_m$ is supported on $m$ points and has the same first $k_m$ moments as 
	the Gaussian distribution, with $k_m \to \infty$ as $m \to \infty$. Then we 
	must have that $X_m$ converges in distribution to $X$ that is Gaussian 
	\cite[Theorem 30.2]{billingsley1995measure}. The lower-semicontinuity of 
	entropy then implies that $\liminf_m H(X_m) \geq H(X)$, which is infinity, 
	and we are done.
	This confirms the heuristic statement that it is entropy and not 
	cardinality that helps match moments, and hence, replicate tail behaviour 
	of continuous distributions.

	\section*{Acknowledgment}
	The work of S.~Shamai was supported by the German Research Foundation (DFG) 
	via
	the German--Israeli Project Cooperation (DIP), under Project SH 1937/1--1.

	
	\printbibliography

@book{Schmudgen2017book,
	title        = {The moment problem},
	author       = {Konrad Schmüdgen},
	series       = {Graduate Texts in Mathematics},
	publisher    = {Springer},
	year         = {2017},
	doi          = {10.1007/978-3-319-64546-9},
	isbn         = {978-3-319-64545-2},
	eisbn        = {978-3-319-64546-9},
	pages        = {XXIII, 512},
	volume          = {277}
}

@book{akhiezer1965classical,
	title={The classical moment problem and some related questions in analysis},
	author={Akhiezer, Naum Ilji{\v{c}}},
	year={1965},
	publisher={Oliver \& Boyd Edinburgh}
}

@article{CurtoFialkow1991,
	author    = {Raul E. Curto and Lawrence A. Fialkow},
	title     = {Recursiveness, positivity, and truncated moment problems},
	journal   = {Houston Journal of Mathematics},
	volume    = {17},
	number    = {4},
	year      = {1991},
	pages     = {603--635}
}

@article{hamburger1920erweiterung,
	title={{\"U}ber eine {E}rweiterung des {S}tieltjesschen {M}omentenproblems},
	author={Hamburger, Hans},
	journal={Mathematische Annalen},
	volume={81},
	number={2},
	pages={235--319},
	year={1920},
	publisher={Springer}
}

@article{hausdorff1921summationsmethoden,
	title={Summationsmethoden Und {M}omentfolgen. {I}},
	author={Hausdorff, Felix},
	journal={Mathematische Zeitschrift},
	volume={9},
	number={1},
	pages={74--109},
	year={1921},
	publisher={Springer}
}

@inproceedings{stieltjes1894recherches,
	title={Recherches sur les fractions continues},
	author={Stieltjes, T-J},
	booktitle={Annales de la Facult{\'e} des sciences de Toulouse: 
	Math{\'e}matiques},
	volume={8},
	number={4},
	pages={J1--J122},
	year={1894}
}

@book{stoer1980introduction,
	title={Introduction to numerical analysis},
	author={Stoer, Josef and Bulirsch, Roland},
	year={2002},
	publisher={Springer}
}

@inproceedings{wu2010impact,
	title={The impact of constellation cardinality on {G}aussian channel 
	capacity},
	author={Wu, Yihong and Verd{\'u}, Sergio},
	booktitle={48th Annual Allerton Conference on Communication, Control, and 
	Computing (Allerton)},
	pages={620--628},
	year={2010},
	organization={IEEE}
}

@book{gauss1814methodus,
	title={Methodus nova integralium valores per approximationem inveniendi},
	author={Gauss, Carl Friedrich},
	year={1814}
}

@article{smith1971information,
	title={The information capacity of amplitude- and variance-constrained 
	scalar {G}aussian channels},
	author={Smith, Joel G.},
	journal={Information and control},
	volume={18},
	number={3},
	pages={203--219},
	year={1971},
	publisher={Elsevier}
}

@ARTICLE{afaycal2001fading,
	title={The capacity of discrete-time memoryless {R}ayleigh-fading 
	channels}, 
	author={Abou-Faycal, Ibrahim C. and Trott, Mitchell D. and Shamai, Shlomo},
	journal={IEEE Transactions on Information Theory}, 
	year={2001},
	volume={47},
	number={4},
	pages={1290-1301},
	keywords={Rayleigh channels},
	doi={10.1109/18.923716}}

@book{luenberger1997optimization,
	title={Optimization by vector space methods},
	author={Luenberger, David G.},
	year={1997},
	publisher={John Wiley \& Sons}
}

@book{billingsley2013convergence,
	title={Convergence of probability measures},
	author={Billingsley, Patrick},
	year={2013},
	publisher={John Wiley \& Sons}
}

@book{billingsley1995measure,
	title={Probability and measure},
	author={Billingsley, Patrick},
	year={1995},
	publisher={John Wiley \& Sons}
}

@ARTICLE{calmon2018sdpi,
	title={Strong data processing inequalities for input constrained additive 
	noise channels}, 
	author={Calmon, Flavio P. and Polyanskiy, Yury and Wu, Yihong},
	journal={IEEE Transactions on Information Theory}, 
	year={2018},
	volume={64},
	number={3},
	pages={1879-1892},
	doi={10.1109/TIT.2017.2782359}}

@article{mccullagh1994does,
	title={Does the moment-generating function characterize a distribution?},
	author={McCullagh, Peter},
	journal={The American Statistician},
	volume={48},
	number={3},
	pages={208--208},
	year={1994},
	publisher={Taylor \& Francis}
}

@INPROCEEDINGS{guo2004immse, 
	title={Mutual information and {MMSE} in {G}aussian channels}, 
	author={Dongning Guo and Shamai, Shlomo and Verd\'u, Sergio},
	booktitle={Proc.\ IEEE International Symposium on Information Theory 
	(ISIT)},
	year={2004},
	volume={},
	number={},
	pages={349-349},
	doi={10.1109/ISIT.2004.1365386}}

@INPROCEEDINGS{guo2008estimation,
	title={Estimation of non-{G}aussian random variables in {G}aussian noise: 
	Properties of the {MMSE}}, 
	author={Guo, Dongning and Shamai, Shlomo and Verd\'u, Sergio},
	booktitle={Proc.\ IEEE International Symposium on Information Theory 
	(ISIT)}, 
	year={2008},
	volume={},
	number={},
	pages={1083-1087},
	doi={10.1109/ISIT.2008.4595154}}

@article{sharma2010transition,
	title={Transition points in the capacity-achieving distribution for the 
	peak-power limited {AWGN} and free-space optical intensity channels},
	author={Sharma, Naresh and Shamai, Shlomo},
	journal={Problems of Information Transmission},
	volume={46},
	number={4},
	pages={283--299},
	year={2010},
	publisher={Springer}
}

@article{dytso2019capacity,
	title={The capacity achieving distribution for the amplitude constrained 
	additive {G}aussian channel: An upper bound on the number of mass points},
	author={Dytso, Alex and Yagli, Semih and Poor, H.\ Vincent and Shlomo 
	Shamai},
	journal={IEEE Transactions on Information Theory},
	volume={66},
	number={4},
	pages={2006--2022},
	year={2019},
	publisher={IEEE}
}

@article{thangaraj2017capacity,
	title={Capacity bounds for discrete-time, amplitude-constrained, additive 
	white {G}aussian noise channels},
	author={Thangaraj, Andrew and Kramer, Gerhard and B{\"o}cherer, Georg},
	journal={IEEE Transactions on Information Theory},
	volume={63},
	number={7},
	pages={4172--4182},
	year={2017},
	publisher={IEEE}
}

@inproceedings{ma2021first,
	title={First- and second-moment constrained {G}aussian channels},
	author={Ma, Shuai and Wigger, Mich{\`e}le},
	booktitle={2021 IEEE International Symposium on Information Theory (ISIT)},
	pages={432--437},
	year={2021}
}

@ARTICLE{lapidoth2009poisson,
	title={On the capacity of the discrete-time {P}oisson channel}, 
	author={Lapidoth, Amos and Moser, Stefan M.},
	journal={IEEE Transactions on Information Theory}, 
	year={2009},
	volume={55},
	number={1},
	pages={303-322},
	doi={10.1109/TIT.2008.2008121}}

@INPROCEEDINGS{barletta2024binomial, 
	title={Binomial channel: On the capacity-achieving distribution and bounds 
	on the capacity},
	author={Barletta, Luca and Zieder, Ian and Favano, Antonino and Dytso, 
	Alex},
	booktitle={Proc.\ IEEE International Symposium on Information Theory 
	(ISIT)}, 
	year={2024},
	volume={},
	number={},
	pages={711-716},
	doi={10.1109/ISIT57864.2024.10619601}}

@ARTICLE{liu2022lossy,
	title={Cross-domain lossy compression as entropy constrained optimal 
	transport}, 
	author={Liu, Huan and Zhang, George and Chen, Jun and Khisti, Ashish},
	journal={IEEE Journal on Selected Areas in Information Theory}, 
	year={2022},
	volume={3},
	number={3},
	pages={513-527},
	keywords={Image coding;Image reconstruction;Distortion;Image 
	restoration;Decoding;Degradation;Deep learning;Information 
	theory;rate-distortion theory;image compression;image restoration;optimal 
	transport;deep learning},
	doi={10.1109/JSAIT.2022.3229670}}

@inproceedings{
	ebrahimi2024minimum,
	title={Minimum entropy coupling with bottleneck},
	author={Mohammad Reza Ebrahimi and Jun Chen and Ashish Khisti},
	booktitle={The Thirty-eighth Annual Conference on Neural Information 
	Processing Systems (NeurIPS)},
	year={2024}
}

@ARTICLE{alghamdi2024moments,
	title={Measuring information from moments}, 
	author={Alghamdi, Wael and Calmon, Flavio P.},
	journal={IEEE Transactions on Information Theory}, 
	year={2024},
	volume={70},
	number={2},
	pages={763-802},
	keywords={Random variables;Mutual information;Entropy;Mean square error 
	methods;Hilbert space;Convergence;Standards;Polynomials;Hilbert space;AWGN 
	channels;estimation theory;probability},
	doi={10.1109/TIT.2022.3202492}}

@inproceedings{li2015generative,
	title={Generative moment matching networks},
	author={Li, Yujia and Swersky, Kevin and Zemel, Rich},
	booktitle={International conference on machine learning (ICML)},
	pages={1718--1727},
	year={2015},
	organization={PMLR}
}

@inproceedings{nguyen2021distributional,
	title={Distributional reinforcement learning via moment matching},
	author={Nguyen-Tang, Thanh and Gupta, Sunil and Venkatesh, Svetha},
	booktitle={Proceedings of the AAAI Conference on Artificial Intelligence},
	volume={35},
	number={10},
	pages={9144--9152},
	year={2021}
}

@article{tuy1987convex,
	title={Convex programs with an additional reverse convex constraint},
	author={Tuy, Hoang},
	journal={Journal of optimization theory and applications},
	volume={52},
	number={3},
	pages={463--486},
	year={1987},
	publisher={Springer}
}

@ARTICLE{wu2012mmse,
	title={Functional properties of minimum mean-square error and mutual 
	information}, 
	author={Wu, Yihong and Verd\'u, Sergio},
	journal={IEEE Transactions on Information Theory}, 
	year={2012},
	volume={58},
	number={3},
	pages={1289-1301},
	doi={10.1109/TIT.2011.2174959}}
	
\end{document}